\documentclass[journal]{IEEEtran}
\ifCLASSINFOpdf
\else
\fi
\usepackage{graphicx}
\usepackage{amssymb}
\usepackage{amsmath}
\usepackage{multicol}
\usepackage{stfloats}
\usepackage{cite}
\usepackage{listings}
\usepackage{booktabs}
\usepackage{makecell}
\newtheorem{theorem}{Theorem}
\newtheorem{proposition}{Proposition}
\newtheorem{corollary}{Corollary}
\newtheorem{lemma}{Lemma}
\newtheorem{definition}{Definition}
\newtheorem{assumption}{Assumption}
\newtheorem{remark}{Remark}
\newtheorem{proof}{Proof}
\newtheorem{problem}{Problem}

\usepackage{tikz,xcolor,hyperref}
\definecolor{lime}{HTML}{A6CE39}
\DeclareRobustCommand{\orcidicon}{
	\begin{tikzpicture}
		\draw[lime, fill=lime] (0,0) 
		circle [radius=0.16] 
		node[white] {{\fontfamily{qag}\selectfont \tiny ID}}; 
		\draw[white, fill=white] (-0.0625,0.095) 
		circle [radius=0.007];	  
	\end{tikzpicture}
	\hspace{-2mm}}
\foreach \x in {A, ..., Z}{
	\expandafter\xdef\csname orcid\x\endcsname{\noexpand\href{https://orcid.org/\csname orcidauthor\x\endcsname}{\noexpand\orcidicon}}
}

\makeatother

\begin{document}
\title{A Distributed Cluster Economic Dispatch Scheme for Cross-regional Microgrids Induced by Well-designed Communication Weights}
\author{Yalin Zhang\orcidA{},~\IEEEmembership{Member,~IEEE,}
	Zhongxin Liu\orcidB{},~\IEEEmembership{Member,~IEEE,}
	Yulin Chen\orcidC{},~\IEEEmembership{Member,~IEEE,}
	Donglian Qi\orcidE{},~\IEEEmembership{Senior Member,~IEEE,}
	and Zengqiang Chen\orcidD{}
	\thanks{Manuscript received XX, XX; revised XX, XX;
		accepted XX, XX. This work is supported in part by the Project of Sanya Yazhou Bay Science and Technology City (Grant No. SKJC-JYRC-2024-66) and in part by the National Natural Science Foundation of China (Grant No. 52477133) (Corresponding author: Zhongxin Liu and Yulin Chen.). 
		\par Yalin Zhang and Yulin Chen are with the Hainan Institute, Zhejiang University, Sanya 572025, China. Zhongxin Liu and Zengqiang Chen are with the College of Artificial Intelligence, Nankai University, Tianjin 300350, and also with the Tianjin Key Laboratory of Interventional Brain-Computer Interface and Intelligent Rehabilitation, Nankai University, Tianjin 300350, China. Donglian Qi is with the College of of Electrical Engineering, Zhejiang University, Hangzhou 310007, China (e-mail: zhangyalin@zju.edu.cn; lzhx@nankai.edu.cn; chenyl2017@zju.edu.cn; qidl@zju.edu.cn; chenzq@nankai.edu.cn).
}}
\markboth{IEEE/CAA JOURNAL OF AUTOMATICA SINICA, VOL. XX, NO. XX, XX XX}
{Zhang \MakeLowercase{\textit{et al.}}: Distributed Cluster Economic Dispatch Scheme of Cross-regional Microgrids Aggregated by A VPP Induced by Well-designed Communication Weights}
\maketitle
\begin{abstract}
	A large-scale microgrid typically consists of several cross-regional subgrids aggregated by a virtual power plant (VPP). However, current consensus based schemes can-not guarantee the feature of differential demand between subgrids. Thus, distributed cluster consensus control induced by communication weights is investigated in this paper to solve the ED problem of a large-scale microgrid, which can achieve the expected cluster via well-designed communication weights. A communication weight matrix design method for a directed and connected graph based on eigenvector centrality is designed, which enables the adjacency matrix of the communication network to have a given leading eigenvector and allows agents in each cluster to have the same eigenvector center value. Based on this, a distributed cluster ED scheme, namely a leader-follower cluster consensus controller, is designed to drive marginal cost (MC) to achieve multiconsensus, thus allocating power among DGs. In addition, the power deficit of each subgrid collected by a VPP can be allocated to utility grids according to predetermined ratios, thus maintaining power supply-demand balance of each subgrid. For this scheme, it should be emphasized that the weighted network used is directed and connected; meanwhile, leader information only can be accessed by a few clusters. Correspondingly, relevant simulations are attached to verify the effectiveness of the designed scheme.
\end{abstract}
\begin{IEEEkeywords}
   Microgrid, multi-agent systems, economic dispatch, distributed control, cluster consensus.
\end{IEEEkeywords}
\IEEEpeerreviewmaketitle
\begin{center}
	N\footnotesize{OMENCLATURE}
\end{center}
\normalsize
\begin{tabbing}
	\hspace{2cm} \= \kill
	BESS\> battery energy storage system\\
	DG\> distributed generation\\
	VPP\>virtual power plant\\
	UG\>utility grid\\
	ED\>economic dispatch\\
	MG\> microgrid\\
	NMGs\>networked MGs\\
	MAS\> multi-agent system\\
	LFCC\>leader-follower cluster consensus\\
	$\mathrm{m}$, $\mu$\> the numbers of subgrids and UGs, respectively\\
	$\mathrm{n}_j-\mathrm{n}_{j-1}$\> the number of DGs in subgrid $j$ \\
	$\alpha_i$, $\beta_i$\>the cost function coefficients of DG $i$\\
	$P_i$, $C_i(P_i)$\>the output power and cost function of DG $i$\\
	$C^j(P_{\mathrm{n}_{j-1}+1},\cdots,P_{\mathrm{n}_j},P_{\mathrm{M},j})$\>\\
	\>the total electricity cost of subgrid $j$\\
	$P_{\mathrm{M},j}$\> the total exchange power between\\
	\>subgrid $j$ and UGs\\
	$\rho_k$, $\rho$\>the electric price of UG $k$ and its stack vector\\
	$\lambda_i$\>the MC of DG $i$\\
	$W$\>the aggregation weight\\
	$\omega_{jk}$\> the total exchange power ratio between\\
	\>subgrid $j$ and UG $k$\\
	$\mathrm{D}_j$\> the total load of subgrid $j$\\
	$P_{\mathrm{UG},k,i}$\>the exchange power between subgrid $j$\\
	\>and UG $k$\\
	$P_{\mathrm{UG},k}$\> the output power of UG $k$ \\
	${\cal C}_j$\> agent cluster $j$\\
	$\cal F$\> the flexible agent cluster\\
	$\cal G$, $\cal V$, $\cal E$\> the communication graph, vertex set\\
	\>and edge sets\\
	${\cal G}_j$\> the communication topology of agents in cluster $j$\\
	$\cal A$, $\rho({\cal A})$\> the adjacency matrix of $\cal G$ and its spectral radius\\
	$\Omega$\>the communication weight\\
	$\mathrm{d}_i$\>the number of agent $i$'s neighbors\\
	${\tilde{\cal L}}_j$\>the Laplacian matrix corresponding to the\\
	\>communication topology of agents belong to ${\cal C}_j$\\
	$\cal B$\> the adjacency vector \\
	$\mathrm{vec}(\circ)$\> the vectorization\\
	$\mathrm{vec}^{-1}(\circ)$\> the inverse vectorization\\
	$\mathrm{den}({\tilde\theta})|_{r}$\> the densification\\
	$\mathrm{spar}({\tilde\theta})|_{r}$\> the sparsification\\
	$u_i$, ${\cal K}_1$ \> the pinning control input and its control gain\\
	$\chi_i$, ${\cal K}_2$\> the average power mismatch estimated by agent $i$\\
	$\mathrm{I}_\mathrm{n}$\>an identity matrix with $\mathrm{n}$-dimension\\
	$\overline{{\cal B}}=\mathrm{diag}\{\mathrm{\overline{b}}_{11},\cdots,\mathrm{\overline{b}}_{\mathrm{nn}}\}$\\
	\> $\mathrm{\overline{b}}_{ii}=1$ if agent $i$ is flexible; otherwise, $\mathrm{\overline{b}}_{ii}=0$\\
	${\cal Q}$\> a characteristic matrix\\
	$\Delta P_j$\> the total power deficit of subgrid $j$\\
	$P_{\mathrm{UG},k}$\> the output power of UG $k$\\
	$u_k^\mathrm{UG}$\> a pinning control input of $P_{\mathrm{UG},k}$ \\
	$\mathrm{G}=[\mathrm{g}_{k_1k_2}]$\> the adjacency matrix corresponding to the\\
	\>communication topology of UG agents\\
	$R=[r_{k_1k_2}]$\> the communication weight of $\mathrm{G}$\\
	${\cal H}=\mathrm{diag}\{\mathrm{h}_{11},\cdots,\mathrm{h}_{\mu\mu}\}$\>\\
	\> the adjacency matrix corresponding to the\\
	\>communication topology between UG agents\\
	\>and VPP agent
\end{tabbing}
\section{Introduction}
\subsection{Background and motivation}
\par \IEEEPARstart{U}{sually}, a large-scale microgrid (MG) is composed of multiple subgrids interconnected across regions, each of which contains distributed generations (DGs) and battery energy storage systems (BESSs), and is often aggregated through a virtual power plant (VPP) \cite{ESFAHANI2025115929, RUAN2024100170}. As shown in Fig. \ref{vpp}, large components of power generation equipment such as hydropower stations, wind farms \cite{AAA2024}, and thermal power generation units act as utility grids (UGs) to jointly provide frequency and voltage support for all subgrids \cite{7095586,JU2016184}. However, the distributed cluster economic dispatch (ED) problem with a VPP remains an unresolved issue, which prompts us to invest time and energy in investigating this area. In addition with regards to the power allocation scheme of DGs/BESSs and power deficit estimation scheme of each subgrid included in the standard ED problem, the cluster ED problem of a large-scale MG also includes the power allocation scheme of UGs. Moreover, the power allocation scheme in cluster economic dispatch needs to address the unique requirements of each subgrid, such as the exchange power ratio designated by the VPP between it and all UGs.
\begin{figure}[h]
	\centering
	\includegraphics[width=8cm]{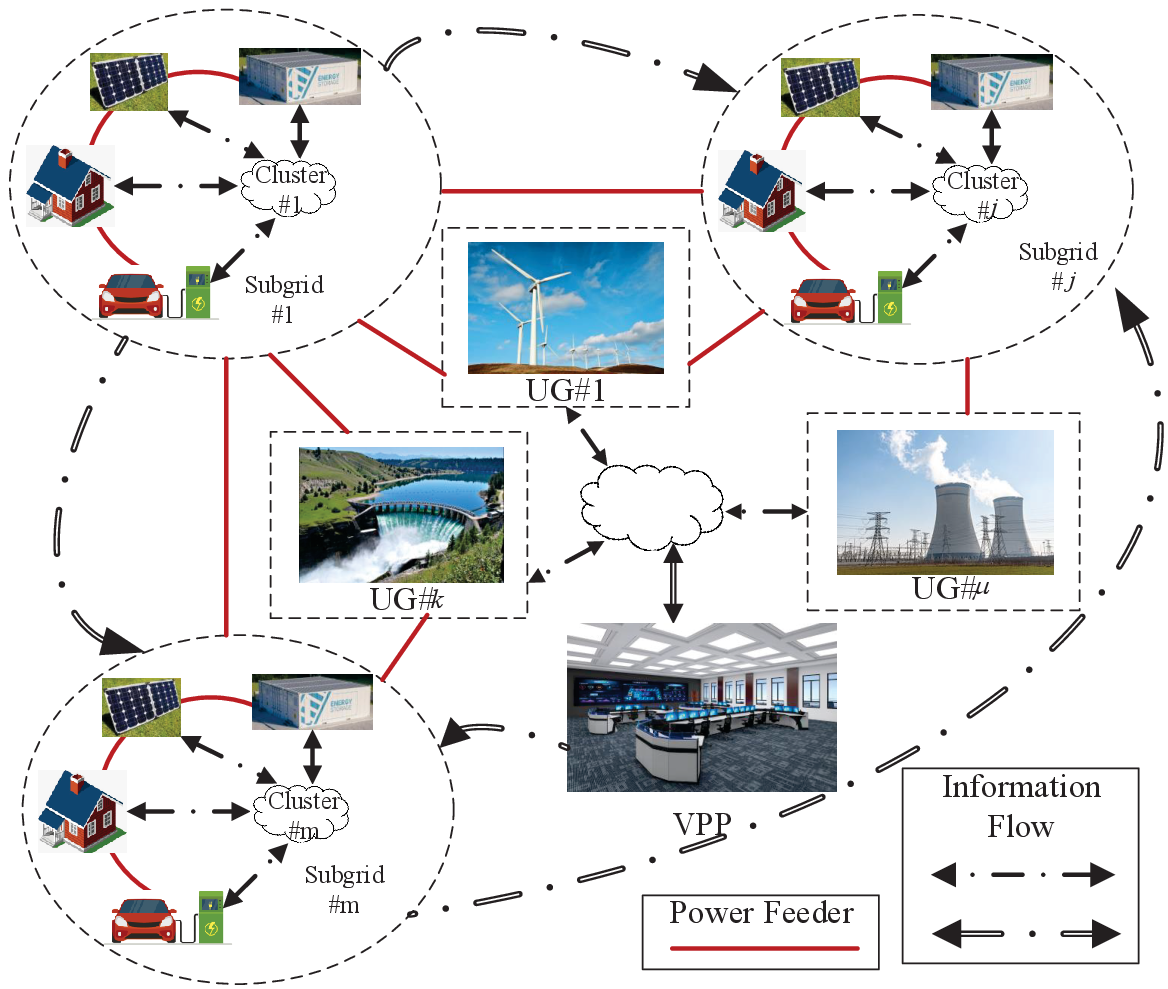}
	\caption{MGs aggregation strategy with a VPP}
	\label{vpp}
\end{figure} 
\par A large number of DGs, loads, and BESSs are often aggregated in a large-scale MG, which enhances capacity but also brings new challenges to ED. On the one hand, for a grid connected MG, all current distributed solutions are aimed at solving the ED problem of the MG as a single aggregate entity. Even for a large-scale MG containing multiple interconnected region subgrids, existing solutions can only solve the ED problem of this MG as a lowest-order aggregate \cite{JAS-2024-0296,10679615,10302354,10488096,JI20231,CHANG2024128391,JI2024127178,9928208,10693360,HBLYG2023,10379560}. However, the differences between subgrids have not been seriously considered, resulting in the unique needs of regional subgrids not being fully satisfied, i.e., most notably the differences in power demand ratios for different types of UGs by region subgrids. Determining how to achieve DG aggregation control by designing ED schemes to meet the unique needs of subgrids in different regions is an urgent problem that needs to be solved. On the other hand, the distributed ED scheme based on multi-agent systems (MASs) \cite{Moshayedi2025} relies on the communication topology \cite{9197633,9248603,8760423,9133204,liuDistributedOptimizationAlgorithm2021,9408234,9063434,9127520,8907488,8978751,zaeryDistributedGlobalEconomical2020,babazadeh-dizajiDistributedHierarchicalControl2020a,zaeryNovelFullyDistributed2021a,10102332}. Under the same controller, different types of communication topologies result in different cluster steady states, namely consensus and multi-consensus states \cite{cacaceTopologyinducedContainmentGeneral2021,monacoMulticonsensusAlmostEquitable2019,liuGroupConsensusLinear2022,gambuzzaDistributedControlMulticonsensus2021,luoClusterConsensusControl2022,tomaselliControlMulticonsensusMultiagent2024}. Especially for a large-scale MG with multiple UGs and interconnected multiple region subgrids, the communication topology determines the differences in ED results among subgrids. Exploring how communication topology affects dispatch results is a very meaningful task.
\subsection{Related work on distributed ED schemes of MGs}
\par In recent years, ED schemes based on consensus theory have been widely studied for a MG as a lowest-order aggregation. Without exception, these schemes include marginal cost (MC) \cite{10679615} consensus controllers and a routing controller to reduce electricity costs of a MG and to ensure supply-demand balance. Among them, some excellent distributed ED solutions aim to improve convergence rate \cite{10302354,10488096,JI20231,CHANG2024128391,JI2024127178} and control accuracy \cite{10302354,10488096}, save communication resources \cite{CHANG2024128391,9928208,10693360,HBLYG2023}, and protect data privacy \cite{10379560}, etc. It is not difficult to find that all of the above studies are focused on solving the ED problem of a MG as a lowest-order aggregation, which does not apply to solving the cluster ED problem of a large-scale MG. The capacity of various regions can be effectively utilized by interconnecting cross regional MGs to enhance the reliability of the power grid \cite{9197633}. However, a large-scale MG is composed of multiple interconnected regional subgrids. A DG/DESS in different subgrids belongs to different shareholders, and their power demand ratios for the same UG are different. This difference needs to be taken into account for the designed cluster distributed ED scheme.  
\par At present, cluster control schemes for MGs based on MASs are mainly divided into double-layer modes \cite{9248603,8760423,9133204,liuDistributedOptimizationAlgorithm2021} and single-layer modes \cite{9408234,9063434,9127520,8907488,8978751,zaeryDistributedGlobalEconomical2020,babazadeh-dizajiDistributedHierarchicalControl2020a,zaeryNovelFullyDistributed2021a,10102332}. 
\subsubsection{Double-layer modes for MGs}
\par A double-layer mode is actually a networked MGs (NMGs) peer-to-peer mode for cluster control of MGs \cite{9133204}, where the NMG layer is responsible for calculating the control signals to pin the MG layer. Each DG, at the MG layer, is controlled by an agent, while each MG, at the networked MG (NMG) layer \cite{9248603,8760423}, is controlled by an agent which pins a DG agent. Some researchers \cite{liuDistributedOptimizationAlgorithm2021} adopt a double-layer network to solve the cluster ED problem, thus achieving MC consensus over a connected and unweighted communication network and ensuring power supply and demand balance. Although the NMG layer requires less computing power, this method is too cumbersome for DG aggregation units because continuous communication is required between the two layers. Especially when there are a large number of aggregation units, this mode greatly increases the unnecessary complexity and computationally intensive of the ED scheme. This mode requires a high bandwidth communication link between the NMG layer and each MG layer.
\subsubsection{Single-layer modes for MGs}
\par A single-layer mode is called a master-slave scheme \cite{9408234,9127520}. As the name suggests, each group of DGs is governed by a MAS, which is composed of some master agents and slave agents. Master agents of each cluster are responsible for external communication \cite{9063434,9127520,8907488,8978751}. Researchers in \cite{9063434,9127520,8907488,8978751} advocate using a pinning control scheme, where each cluster is controlled by its master agents. In view of this, the authors elaborate on this mode in detail in \cite{10102332} and apply it to the secondary control of BESSs in the AC/DC hybrid MG. In \cite{zaeryDistributedGlobalEconomical2020}, the ED problem of cluster MGs is solved based on consensus, where an agent is selected from each cluster to be responsible for external communication. In contrast, some authors \cite{babazadeh-dizajiDistributedHierarchicalControl2020a} design a distributed cluster ED scheme to achieve consensus on MC between subgrids by controlling interconnection converters, where the communication topology within subgrids is unweighted and strongly connected. This approach is promoted as a fixed time version in \cite{zaeryNovelFullyDistributed2021a}.
\par Without exception, these researchers do not carefully consider the differences between subgrids when designing distributed ED schemes, resulting in them overlooking the special needs of each subgrid. Therefore, in this article, a general scenario is considered, where the power deficit of each subgrid is supplemented by multiple UGs, and the proportion is determined by a VPP based on its characteristics.
\subsection{State of contribution}
\par Different types of UGs and subgrids are aggregated by a VPP, which can establish exchange power ratios between each subgrid and all UGs based on the special needs of each subgrid and a subgrid at different time periods. The power mismatch, namely power deficit, of each subgrid is evenly allocated among UGs according to these ratios, which leads to MCs of DGs in different subgrids converging to different electricity prices. In other words, the solution to the ED problem of a large-scale microgrid is manifested as the MCs multiconsensus control problem and proportional allocation problem of power mismatch. In view of this, a more advanced distributed controller needs to be designed to address this cluster ED problem. Therefore, a distributed cluster ED scheme with a connected and directed communication topology are designed in this article to achieve MC multiconsensus control so that the cluster ED problem mentioned above can be solved. The specific research contents are as follows:
\begin{enumerate}	
	\item The cluster ED problem of a large-scale MG containing multiple UGs and region MGs based on multiconsensus is investigated in this article, which has been rarely reported in the past. The solution to this problem is that MCs of all DGs in each subgrid must reach a consensus and equal to the convex combination of all UG electricity prices, which is derived from the exchange power ratios defined by the VPP between a subgrid and all UGs.
	\item For a large-scale MG with a VPP aggregation strategy, where the VPP only provides the expected electricity price information for a cluster, a distributed cluster ED scheme is designed using a directed connected communication topology with well-designed communication weights, which leads to the desired MC multiconsensus. It is worth mentioning that the communication coupling between clusters is not zero.
	\item In order to ensure power supply-demand balance in each subgrid, a power mismatch estimation scheme is constructed. Subsequently, an allocation scheme is proposed by a distributed multiconsensus controller so that the power mismatch is evenly distributed among UG according to a predetermined ratio.	
\end{enumerate} 
\par The cluster ED problem based on a VPP aggregation strategy is introduced in Section II. Some of the preparatory knowledge to be used in this article is introduced in Section III. A distributed ED scheme induced by communication weights is designed in Section IV to allocate power among all UGs and DGs to ensure optimal power output and balance power supply and demand. 
In order to verify the effectiveness of the proposed solutions, some simulation cases are designed in Section V. Finally, a conclusion is drawn in Section VI.
\par \textit{Notations}: $\mathrm{n}$ denotes a positive integer, $\mathrm{I}_\mathrm{n}$ is a $\mathrm{n}$-dimension identity matrix, $\otimes$ denotes Kronecker Product, $\circ$ denotes Hadamard product. For a vector $\alpha$, $\alpha^{-1}$ means that all elements take the reciprocal. $\Vert\cdot\Vert$ represents taking the 2-norm. $\mathrm{col}(\circ)$ denotes taking a column vector.
\section{The ED Analysis of MGs under the VPP Aggregation Strategy}
\par In this article, a VPP based aggregation strategy is applied to aggregate MGs. Therefore, in this section, the optimal ED conditions for each subgrid are analyzed using the Lagrange multiplier method. Furthermore, the problem that needs to be solved is mathematically given.
\begin{figure}[h]
	\centering
	\includegraphics[width=8cm]{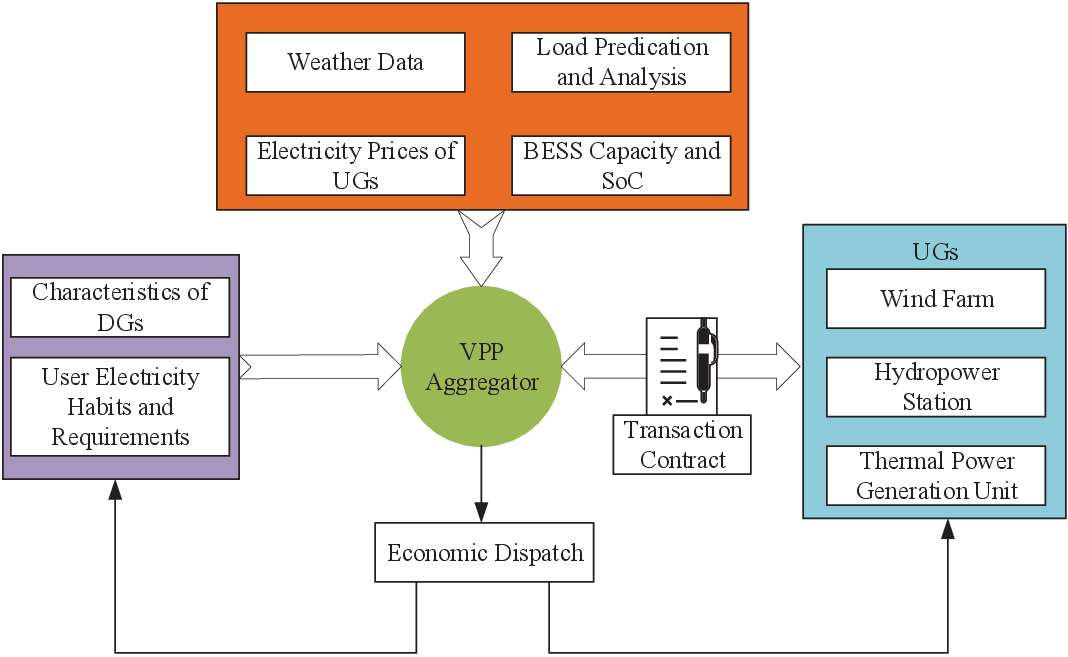}
	\caption{ED of MGs based on a VPP.}
	\label{vpp1}
\end{figure}
\par In each region, a VPP aggregates a large number of renewable energy and energy storage devices, as well as flexible loads such as controllable appliances and electric vehicles. As shown in Fig. \ref{vpp1}, as an aggregator, a VPP needs to have access to weather data, load forecast values, electricity prices of UGs, as well as the capacity and SoC of each energy storage system, in order to arrange power of each DG reasonably. By understanding the characteristics of DGs and the electricity consumption each user, a VPP can schedule controllable loads to achieve demand response. In addition, this VPP signs contracts with various power companies to purchase or sell power to ensure power supply and demand balance within the aggregator. Due to the aggregation effect of a VPP, resources in the region are efficiently utilized, making the ED scheme more flexible.
\par Considering a large-scale MG containing $\mathrm{m}$ subgrids and $\mu$ UGs aggregated by a VPP, the subordinate relationship between DGs and subgrids is as follows
$$\{\underbrace{1,\cdots,\mathrm{n}_1}_{\mathrm{subgrid}\;1},\cdots,\underbrace{\mathrm{n}_{j-1}+1,\cdots,\mathrm{n}_j}_{\mathrm{subgrid}\;j},\cdots,\underbrace{\mathrm{n}_\mathrm{m-1}+1,\cdots,\mathrm{n}_\mathrm{m}}_{\mathrm{subgrid}\;\mathrm{m}}\},$$
where $\mathrm{n}_0=0$, $j\in\{1,2,\cdots,\mathrm{m}\}$, $\mathrm{n}_j-\mathrm{n}_{j-1}$ is the number of DGs in subgrid $j$. In this article, the time variable $t$ is omitted unless necessary. Thus, a quadratic function as follow is adopted in this article as the cost function of DG $i$ in subgrid $j$, $\forall i\in\{\mathrm{n}_{j-1}+1,\cdots,\mathrm{n}_j\}$, $\forall j\in\{1,\cdots,\mathrm{m}\}$,
\begin{equation}
	C_i(P_i)=\alpha_i P_i^2+\beta_iP_i,
\end{equation}
where $P_i$ is the output power of DG $i$ in subgrid $j$, $\alpha_i$ and $\beta_i$ are the coefficients of the cost function $C_i(P_i)$ and all positive. The total cost of subgrid $j$ is 
\begin{equation}\label{costfi}
	\begin{aligned}
		&C^j(P_{\mathrm{n}_{j-1}+1},\cdots,P_{\mathrm{n}_j},P_{\mathrm{M},j})\\
		=&\sum_{i=\mathrm{n}_{j-1}+1}^{\mathrm{n}_j}C_i(P_i)+\sum_{k=1}^{\mu}\rho_kw_{jk}P_{\mathrm{M},j},
	\end{aligned}
\end{equation}
where $\rho_k$ is the electricity price of UG $k$, $P_{\mathrm{M},j}$ is the total exchange power between subgrid $j$ and UGs, $w_{jk}$ for $k\in\{1,2,\cdots,\mu\}$ is the ratio of allocating the deficit power of subgrid $j$ to UG $k$, that is, $w_{jk}P_{\mathrm{M},j}$ is the exchange power between subgrid $j$ and UG $k$, and $\sum_{k=1}^{\mu}w_{jk}=1$, $\rho_kw_{jk}$ is the aggregate electricity price for subgrid $j$.
For subgrid $j$, $\forall j\in\{1,2,\cdots,\mathrm{m}\}$, there must be 
\begin{equation}\label{balancej}
	\sum_{i=\mathrm{n}_{j-1}+1}^{\mathrm{n}_j}P_i-\mathrm{D}_j+P_{\mathrm{M},j}=0,
\end{equation}
where $\mathrm{D}_j$ is the total load power in subgrid $j$, which implies that the power supply-demand should be maintained in each subgrid.
\par In summary, invoking \eqref{costfi} and \eqref{balancej}, a Lagrangian function for subgrid $j$ is constructed as
\begin{equation}
	\begin{aligned}
		L_j=&C^j(P_{\mathrm{n}_{j-1}+1},\cdots,P_{\mathrm{n}_j},P_{\mathrm{M},j})\\
		&-\gamma_j(\sum_{i=\mathrm{n}_{j-1}+1}^{\mathrm{n}_j}P_i-\mathrm{D}_j+P_{\mathrm{M},j}),
	\end{aligned}
\end{equation}
which, by calculating the gradient, $\forall i\in\{\mathrm{n}_{j-1}+1,\cdots,\mathrm{n}_j\}$, leads to 
\begin{equation}
	\label{MCc}
	\left\{\begin{aligned}
		&\lambda_i=\gamma_j,\\
		&\sum_{k=1}^{\mu}w_{jk}\rho_k=\gamma_j,
	\end{aligned}\right.
\end{equation}
where $\lambda_i=2\alpha_iP_i+\beta_i$ is so-called the MC of DG $i$, $\gamma_j$ is a Lagrange multiplier. So, for the cluster ED problem of a large-scale MG with a VPP aggregator, there are two problems that need to be addressed.
\begin{problem} \label{p1}
	For a large-scale MG with a VPP aggregator, it is necessary to design a distributed scheme to solve the cluster ED problem. Specifically, for the MG side, there are the following issues that need to be addressed.	
	\\(\textit{P1.1}) According to \eqref{MCc}, the MCs of DGs in each subgrid should asymptotically reach consensus and converge to the expected aggregate electricity price, i.e., 
	\begin{equation}
		\lim\limits_{t\to+\infty}\lambda_i=\sum_{k=1}^{\mu}w_{jk}\rho_k,
	\end{equation}
	where $\forall i\in\{\mathrm{n}_{j-1}+1,\mathrm{n}_{j-1}+2,\cdots,\mathrm{n}_j\}$, $\forall j\in\{1,2,\cdots,\mathrm{m}\}$.
	\\(\textit{P1.2}) $\mathrm{D}_j-\sum_{i=\mathrm{n}_{j-1}+1}^{\mathrm{n}_j}P_i$ is referred to as the total power mismatch of subgrid $j$. Then, the total exchange power between each subgrid and UGs should asymptotically converge to the total power mismatch of subgrid, i.e.,
	\begin{equation}
		\lim\limits_{t\to+\infty}P_{\mathrm{M},j}=\mathrm{D}_j-\lim\limits_{t\to+\infty}\sum_{i=\mathrm{n}_{j-1}+1}^{\mathrm{n}_j}P_j,
	\end{equation}
	by which the power supply and demand of each subgrid can be maintained.
\end{problem}
\begin{problem}\label{p2}
	The total exchange power between each subgrid and UGs should be allocated to each UG proportionally, i.e.
	\begin{equation}
		\lim\limits_{t\to+\infty}P_{\mathrm{UG},k,j}=w_{ik}P_{\mathrm{M},j},
	\end{equation} 
	where $P_{\mathrm{UG},k,i}$ is the exchange power between subgrid $j$ and UG $k$. Thus, the output power of each UG should be achieved by
	\begin{equation}
		\lim\limits_{t\to+\infty}P_{\mathrm{UG},k}=\sum_{i=1}^{\mathrm{m}}w_{ik}P_{\mathrm{M},j},
	\end{equation}
	where $P_{\mathrm{UG},k}=\sum_{i=1}^{\mathrm{m}}P_{\mathrm{UG},k,j}$ is the output power of UG $k$.
\end{problem}
\section{Preliminaries}
\par Due to the selection of MASs to manage DGs in this article, some preliminary knowledge about graph theory and matrix theory, as well as some necessary definitions, are introduced here.
\subsection{Graph Theory}
\begin{definition}
	\label{de1}
	For each cluster, agents that can receive external information from another cluster are considered flexible, and they only disclose MCs of DGs they manage to other clusters. Without loss of generality, in this article, there is only one flexible agent in each cluster.
\end{definition}
\begin{definition}
	\label{defq}
	\cite{LVGFM2021} For a partition ${\cal C}=\{{\cal C}_1,\cdots,{\cal C}_m\}$ of agents, where ${\cal C}_j$ denotes agent cluster $j$, it is assumed that ${\cal C}_i\cap {\cal C}_j=\emptyset$. We define a binary matrix ${\cal Q}=[q_{ij}]$ with $\mathrm{n_m}\times \mathrm{m}$-dimension. If $i\in {\cal C}_j$, $q_{ij}=1$; $q_{ij}=0$ otherwise.
\end{definition}
\par The one key technology used to solve Problems \ref{p1} and \ref{p2} of a large-scale MG containing $\mathrm{n}_\mathrm{m}$ DGs is a communication mechanism. Specifically, the design of its distributed ED scheme based on a consensus algorithm of MASs often depends on a adjacency matrix ${\cal A}$ related to a communication topology. Agents governing DGs, which are partitioned into several clusters ${\cal C}=\{{\cal C}_1,\cdots,{\cal C}_m\}$, communicate with each other over a directed graph $\mathcal{G}(\mathcal{V},\mathcal{E})$. $\mathcal{V}=\{\mathrm{v}_1,\cdots,\mathrm{v}_{\mathrm{n_m}}\}$ denotes a set of vertices, where each vertex denotes an agent, which consists of $\mathrm{m}$ mutually exclusive sets, i.e., $\mathcal{V}_1$, $\cdots$, $\mathcal{V}_j$, $\cdots$, $\mathcal{V}_\mathrm{m}$. The subordinate relationship between agents and clusters (vertex subsets) is as follows
$$\{\underbrace{\mathrm{v}_1,\cdots,\mathrm{v}_{\mathrm{n}_1}}_{\mathcal{V}_1:\;{\cal C}_1},\cdots,\underbrace{\mathrm{v}_{\mathrm{n}_{j-1}+1},\cdots,\mathrm{v}_{\mathrm{n}_j}}_{\mathcal{V}_j:\;{\cal C}_j},\cdots,\underbrace{\mathrm{v}_{\mathrm{n}_{\mathrm{m}-1}+1},\cdots,\mathrm{v}_{\mathrm{n}_\mathrm{m}}}_{\mathcal{V}_\mathrm{m}:\;\cal{C}_\mathrm{m}}\}.$$
$\mathcal{E}=\{(\mathrm{v}_{i_1},\mathrm{v}_{i_2})|\mathrm{v}_{i_1},\mathrm{v}_{i_2}\in{\cal V}\}$ denotes a communication link set, where a directed edge $(\mathrm{v}_{i_1},\mathrm{v}_{i_2})$ implies that information flows from vertices $\mathrm{v}_{i_2}$ to $\mathrm{v}_{i_1}$. Correspondingly, the communication link set ${\cal E}$ also consists of some mutually exclusive subsets, i.e., ${\cal E}_{j_1j_2}$ for $j_1$, $j_2\in\{1,\cdots,\mathrm{m}\}$. If $j_1=j_2=j$, $\mathcal{E}_{jj}=\{(\mathrm{v}_{i_1},\mathrm{v}_{i_2})|\mathrm{v}_{i_1},\mathrm{v}_{i_2}\in{\cal V}_j\}$ denotes the communication link set of cluster $j$; otherwise, $\mathcal{E}_{j_1j_2}=\{(\mathrm{v}_{i_1},\mathrm{v}_{i_2})|\mathrm{v}_{i_1}\in{\cal V}_{j_1},\mathrm{v}_{i_2}\in{\cal V}_{j_2}\}$ denotes the set of communication links from cluster $j_2$ to cluster $j_1$. Define a adjacency matrix ${\tilde {\cal A}}_j=[\mathrm{a}_{i_1i_2}]$, where $\mathrm{v}_{i_1}$, $\mathrm{v}_{i_2}\in {\cal V}_j$. If $(\mathrm{v}_{i_1},\mathrm{v}_{i_2})\in{\cal E}_{jj}$, agent ${i_2}$ is viewed as a neighbor of agent $i_1$, and $\mathrm{a}_{i_1i_2}=1$, $\mathrm{a}_{i_1i_2}=0$ otherwise. Thus, this directed graph $\cal G$ is partitioned into $({\cal G}_1\times{\cal G}_2\times\cdots\times{\cal G}_\mathrm{m})$, where ${\cal G}_j$ is the communication topology of agents in cluster $j$. Define a adjacency matrix ${\cal A}=[\mathrm{a}_{j_1j_2}]$, where agents $\mathrm{n}_{j_1-1}+1$ and $\mathrm{n}_{j_2-1}+1$ are flexible.
\par Besides, $\mathrm{d}_{i_1}$ is the number of neighbors, namely the in-degree, of agent $i_1$. By means of this, a in-degree matrix is defined as $\mathrm{\cal D}_j=\mathrm{diag}(\mathrm{d}_{i_1})_{i_1\in{\cal C}_j}$. So far, a Laplacian matrix ${\tilde{\cal L}}_j$ corresponding to the communication topology of agents belong to ${\cal C}_j$ can be calculated as ${\tilde{\cal L}}_j={\cal D}_j-{\tilde {\cal A}}_j$.
\par For a virtual leader of MASs, according to Definition \ref{de1}, flexible agents are its only neighbors. If flexible agent $i_1\in {\cal C}_j$ can access it, it is denoted as $\mathrm{b}_j= 1$; $\mathrm{b}_j=0$ otherwise. Define a vector as $\mathrm{{\cal B}=\mathrm{col}(\mathrm{b}_1,\cdots,\mathrm{b}_{\mathrm{n_m}})}$.
\begin{definition}
	(\textit{The leader-follower cluster consensus}, LFCC) For MASs, all agents are divided into several clusters ${\cal C}=\{{\cal C}_1,\cdots,{\cal C}_m\}$, and all clusters are pinned by a cluster of leaders. Mathematically, LFCC is formulated as
	$\lim\limits_{t\to+\infty}|x_i-x_{0j}|=0,\,i\in {\cal C}_j,$
	where $x_i$ is the state of agent $i$, $x_{0j}$ is the consensus state of cluster $j$ and a convex combination of the leaders, ${\cal C}_j$ denotes cluster $j$.
\end{definition}
\subsection{Matrix Theory}
\par Before designing a LFCC controller for MCs, some necessary conclusions need to be explained in advance.
\begin{theorem}
	\label{THPF}
	\cite{VVG2017} If a matrix $\Theta$ is a non negative and irreducible square matrix, and $\rho(\Theta)$ is the largest eigenvalue of $\Theta$, then $\rho(\Theta)$ is a simple root of the characteristic equation of $\Theta$, and all components of the corresponding eigenvector are positive.
\end{theorem}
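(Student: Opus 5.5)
The statement is the Perron--Frobenius theorem for irreducible nonnegative matrices. I plan to derive it from the strictly positive case, using the matrix $(\mathrm{I}_\mathrm{n}+\Theta)^{\mathrm{n}-1}$. Let $\Theta\in\mathbb{R}^{\mathrm{n}\times\mathrm{n}}$ be nonnegative and irreducible. Irreducibility means the directed graph with an edge wherever $\theta_{i_1i_2}>0$ is strongly connected, so every pair of vertices is joined by a path of length at most $\mathrm{n}-1$. Expanding binomially gives $(\mathrm{I}_\mathrm{n}+\Theta)^{\mathrm{n}-1}=\sum_{k}\binom{\mathrm{n}-1}{k}\Theta^k$, and $(\Theta^k)_{i_1i_2}>0$ exactly when there is a walk of length $k$ from $i_1$ to $i_2$. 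Hence $(\mathrm{I}_\mathrm{n}+\Theta)^{\mathrm{n}-1}$ is entrywise positive. This is the first key step.

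Next I would set up the Collatz--Wielandt function $r(x)=\min_{x_i>0}(\Theta x)_i/x_i$ on the simplex of nonnegative nonzero vectors. I define $r^\ast=\sup r(x)$. The supremum is attained: restrict to the compact image set $\{(\mathrm{I}_\mathrm{n}+\Theta)^{\mathrm{n}-1}x\}$, which contains only positive vectors and on which $r$ is continuous. Multiplying by $(\mathrm{I}_\mathrm{n}+\Theta)^{\mathrm{n}-1}$, which commutes with $\Theta$, does not decrease $r$. Let $z$ be a maximizer. If $\Theta z-r^\ast z\ge 0$ were nonzero, applying the positive matrix would give $r((\mathrm{I}_\mathrm{n}+\Theta)^{\mathrm{n}-1}z)>r^\ast$, a contradiction. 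So $\Theta z=r^\ast z$. Positivity of $z$ then follows from $(1+r^\ast)^{\mathrm{n}-1}z=(\mathrm{I}_\mathrm{n}+\Theta)^{\mathrm{n}-1}z>0$.

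To see that $r^\ast=\rho(\Theta)$, take any eigenpair $\Theta y=\lambda y$. Taking absolute values gives $\Theta|y|\ge|\lambda||y|$, so $|\lambda|\le r(|y|)\le r^\ast$. Thus $r^\ast$ is the largest eigenvalue, and the statement's $\rho(\Theta)$, read as the spectral radius, equals $r^\ast$ with a positive eigenvector.

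The main obstacle is simplicity, meaning algebraic multiplicity one. I would first show geometric simplicity. If $y$ is another real eigenvector for $r^\ast$ independent of $z$, choose $c$ so that $z-cy\ge0$ has a zero entry and is nonzero. It is still an eigenvector, so by the positivity argument above it is strictly positive, a contradiction. Complex eigenvectors reduce to this case by taking real and imaginary parts.

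For algebraic simplicity I would use the left Perron vector $w>0$, obtained by applying the same argument to $\Theta^\top$, which is also irreducible. Suppose there were a Jordan chain $(\Theta-r^\ast\mathrm{I}_\mathrm{n})v=z$. Multiplying on the left by $w^\top$ gives $0=w^\top z>0$, which is impossible. Hence $r^\ast$ is a simple root of the characteristic polynomial.

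An alternative for the last step is to differentiate the characteristic polynomial and show $p'(r^\ast)=\sum_i\det(r^\ast\mathrm{I}-\Theta_{(i)})>0$, where $\Theta_{(i)}$ are the principal submatrices. This needs $\rho(\Theta_{(i)})<r^\ast$. I would use the Jordan-chain argument instead, as it is shorter.
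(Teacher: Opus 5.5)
The paper does not prove Theorem~\ref{THPF}. It imports the Perron--Frobenius theorem from the cited reference and uses it as a black box. Your proposal is a correct, self-contained proof along Wielandt's classical route:
\begin{enumerate}
\item positivity of $(\mathrm{I}_\mathrm{n}+\Theta)^{\mathrm{n}-1}$;
\item maximization of the Collatz--Wielandt function $r$ over the compact image set of positive vectors;
\item the bound $|\lambda|\le r(|y|)$, which identifies $r^\ast$ with the spectral radius;
\item geometric simplicity by pushing $z-cy$ to a nonnegative vector with a zero entry;
\item algebraic simplicity by pairing a putative Jordan chain with the positive left Perron vector $w$.
\end{enumerate}
You also have the order right: geometric simplicity must come first, so that any Jordan chain at $r^\ast$ ends in a nonzero multiple of $z$ and $w^\mathrm{T}z>0$ yields the contradiction.

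The citation buys brevity. Your route buys a slightly stronger fact, which the paper in fact needs later. With $w>0$ available, any positive eigenvector $v$ of a nonnegative irreducible matrix $B$ must belong to $\rho(B)$, because $\lambda w^\mathrm{T}v=\rho(B)w^\mathrm{T}v$ with $w^\mathrm{T}v>0$. The paper relies on this fact twice:
\begin{enumerate}
\item In Lemma~\ref{lemwt}, it declares $v_1^{\prime}$ the leading eigenvector of ${\cal A}\circ\Omega$ merely from $({\cal A}\circ\Omega)v_1^{\prime}=\rho({\cal A})v_1^{\prime}$.
\item In the proof of Theorem~\ref{TH1}, it treats $\rho({\cal A})\mathrm{I}_\mathrm{m}-{\cal A}\circ\Omega$ as having a simple zero eigenvalue with positive left and right null vectors.
\end{enumerate}
Theorem~\ref{THPF} as stated only guarantees a positive eigenvector for $\rho(\Theta)$, not this converse. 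Your left-vector argument supplies the converse in one line.
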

\begin{lemma}
	\label{lemeq}
	\cite{DHWAW1951} Let $\Theta$ and $\varphi$ be a matrix and vector with $\mathrm{n}$ rows, respectively. Only one of the following conclusions is true:
	\begin{enumerate}
		\item there must be a vector $x$ such that $\Theta x=\varphi$ and $x>0$,
		\item there must be a vector $y$ such that $\mathrm{\Theta}^\mathrm{T}y\ge 0$ and $y^\mathrm{T}\varphi\ge 0$.
	\end{enumerate}
\end{lemma}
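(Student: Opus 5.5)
I will read the lemma in its standard (Stiemke/Gale) form. Taken literally, alternative 2) is satisfied by $y=0$, so both alternatives could hold at once. I therefore take 2) to mean: there is $y$ with $\Theta^\mathrm{T}y\ge 0$, $y^\mathrm{T}\varphi\le 0$, and the pair is nontrivial, i.e.\ either $\Theta^\mathrm{T}y\neq 0$ or $y^\mathrm{T}\varphi<0$. The sign of $y^\mathrm{T}\varphi$ is the one consistent with 1); replacing $y$ by $-y$ gives the equivalent form with the inequalities reversed. The proof then has two parts: the alternatives are mutually exclusive, and at least one of them holds.

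\emph{Exclusivity.} Suppose $x>0$, $\Theta x=\varphi$, and $y$ is as in 2). Then $y^\mathrm{T}\varphi=(\Theta^\mathrm{T}y)^\mathrm{T}x$.
\begin{itemize}
\item If $\Theta^\mathrm{T}y\ge 0$ and $\Theta^\mathrm{T}y\neq 0$, then $(\Theta^\mathrm{T}y)^\mathrm{T}x>0$ because every entry of $x$ is positive. This contradicts $y^\mathrm{T}\varphi\le 0$.
\item If $\Theta^\mathrm{T}y=0$, then $y^\mathrm{T}\varphi=0$, which contradicts $y^\mathrm{T}\varphi<0$.
\end{itemize}

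\emph{Exhaustiveness.} Assume 1) fails. Let $S=\{\Theta x: x>0\}$. It is a nonempty convex cone, relatively open in $\mathrm{range}(\Theta)$, being the linear image of the open orthant. By assumption $\varphi\notin S$. There are two cases.
\begin{itemize}
\item \emph{Case $\varphi\notin\mathrm{range}(\Theta)$.} Take $y=-\pi$, where $\pi$ is the orthogonal projection of $\varphi$ onto $\mathrm{range}(\Theta)^\perp$; note $\pi\neq 0$ in this case. Then $\Theta^\mathrm{T}y=0$ and $y^\mathrm{T}\varphi=-\Vert\pi\Vert^2<0$, so 2) holds.
\item \emph{Case $\varphi\in\mathrm{range}(\Theta)$.} The point $\{\varphi\}$ and the relatively open convex set $S$ are disjoint subsets of the subspace $\mathrm{range}(\Theta)$. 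The proper separation theorem, applied inside this subspace, gives $y\in\mathrm{range}(\Theta)$ with $y^\mathrm{T}\Theta x\ge y^\mathrm{T}\varphi$ for all $x>0$, and with $y^\mathrm{T}\Theta x$ not constant on $S$.
\begin{itemize}
\item Letting $x\to 0^+$ gives $y^\mathrm{T}\varphi\le 0$.
\item Scaling any single coordinate of $x$ to $+\infty$ forces every entry of $\Theta^\mathrm{T}y$ to be nonnegative.
\item Non-constancy on $S$ gives $\Theta^\mathrm{T}y\neq 0$.
\end{itemize}
Hence 2) holds.
\end{itemize}

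The main obstacle is this second case, where $\varphi$ lies in the range but not in the open cone $S$. Ordinary separation could return a $y$ with $\Theta^\mathrm{T}y=0$ and $y^\mathrm{T}\varphi=0$, which is useless. Restricting to $\mathrm{range}(\Theta)$ and using relative openness of $S$ is what rules this out, so I would state that step carefully.

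As an alternative route, I would reduce to Farkas' lemma. Homogenize 1) as $\Theta z=t\varphi$, $z\ge \mathbf{1}$, $t>0$, and set $x=z/t$. Applying Farkas to this system produces the same nontrivial certificate $y$.
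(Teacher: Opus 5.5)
Your proof is correct, and your repair of the statement is necessary. As printed, alternative 2) is always satisfied by $y=0$. Even for nonzero $y$ the printed sign is wrong: $\Theta x=\varphi$ with $x>0$ and $\Theta^{\mathrm T}y\ge 0$ force $y^{\mathrm T}\varphi=(\Theta^{\mathrm T}y)^{\mathrm T}x\ge 0$, so both alternatives would hold together. Your nontrivial form is exactly Stiemke's theorem applied to $[\Theta,\,-\varphi]$ after homogenizing to $\Theta x=t\varphi$, $x>0$, $t>0$. That form requires $\Theta^{\mathrm T}y\ge 0$, $y^{\mathrm T}\varphi\le 0$, and the two not both zero.

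The paper itself gives no proof; it only imports the lemma from the 1951 reference, so there is no argument of the paper's to match. Compared with quoting Stiemke or Farkas, your separation argument is self-contained. You correctly identify the key point: splitting on whether $\varphi\in\mathrm{range}(\Theta)$ and separating inside $\mathrm{range}(\Theta)$, where $S$ is open, is what rules out the useless certificate $\Theta^{\mathrm T}y=0$, $y^{\mathrm T}\varphi=0$. The Stiemke reduction is a two-line proof once that theorem is granted. Your non-constancy step can be shortened: any nonzero $y\in\mathrm{range}(\Theta)$ automatically has $\Theta^{\mathrm T}y\neq 0$, since $\mathrm{range}(\Theta)\cap\ker\Theta^{\mathrm T}=\{0\}$. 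In the Farkas sketch, first replace $t>0$ by $t\ge 1$, because Farkas needs closed constraints. This costs nothing, since scaling $(z,t)$ up preserves $z\ge\mathbf 1$.

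Your corrected version is also what the paper actually needs in Lemma~\ref{lemwt}. Each column of $\hat M$ has a single positive entry. Each row has at least one, because every flexible agent has an in-neighbour by irreducibility of ${\cal A}$. Hence $\hat M^{\mathrm T}y\ge 0$ forces $y\ge 0$, and then $y^{\mathrm T}\rho({\cal A})v_1'\le 0$ forces $y=0$. So alternative 2) fails, and \eqref{sle} has a strictly positive solution.
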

\begin{definition}
	The eigenvalue and eigenvector corresponding to the spectral radius of matrix $\Theta$ are referred to as the leading eigenvalue and eigenvector, respectively.
\end{definition}
\begin{definition}
	\label{vec}
	\cite{LVGFM2021} Here, the operation of vectorizing a matrix $\Theta$ and its inverse operation are defined as $\theta=\mathrm{vec}(\Theta)$ and $\Theta=\mathrm{vec}^{-1}(\theta)$, respectively. Below are examples to illustrate these two specific operations. For a matrix $\Theta$ as follows, the vectorization $\theta=\mathrm{vec}(\Theta)$ is 
	\begin{equation}
		\Theta=\begin{bmatrix}\theta_{11}&\theta_{12}&\theta_{13}\\\theta_{21}&\theta_{22}&\theta_{23}\end{bmatrix}\Rightarrow\theta=\mathrm {vec}(\Theta)=\left[\begin{array}{c}\theta_{11}\\\theta_{21}\\\theta_{12}\\\theta_{22}\\\theta_{13}\\\theta_{23}\end{array}\right].
	\end{equation}
	Then, to obtain a $2\times 3$-dimension matrix $\Theta$ from a vector $\alpha$, the inverse vectorization $\Theta=\mathrm{vec}^{-1}(\theta)$ is defined as
	\begin{equation}
		\theta=\left[\begin{array}{c}\theta_{11}\\\theta_{21}\\\theta_{12}\\\theta_{22}\\\theta_{13}\\\theta_{23}\end{array}\right]\Rightarrow \Theta=\mathrm {vec}^{-1}(\theta)=\begin{bmatrix}\theta_{11}&\theta_{12}&\theta_{13}\\\theta_{21}&\theta_{22}&\theta_{23}\end{bmatrix}.
	\end{equation}
\end{definition}
\begin{definition}\label{denspar}
	(Densification and sparsification of a vector) For a vector $\theta$ with several zero components, a densification operator $\mathrm {den}(\theta)|_{r}$ involves removing all zero components, for example,
	\begin{equation}
		\theta=\left[\begin{array}{c}\theta_{11}\\0\\\theta_{12}\\0\\\theta_{13}\\0\end{array}\right]\Rightarrow {\tilde\theta}=\mathrm {den}(\theta)|_{r}=\left[\begin{array}{c}\theta_{11}\\\theta_{12}\\\theta_{13}\end{array}\right],
	\end{equation}
	where each element of the set $r=\{2,4,6\}$ represents an index of zero components in $\theta$. In contrast, a sparsification operator $\mathrm{spar}(\theta)|_{r}$ is used to expand a vector $\theta$ into a new vector containing some zero components, for example,
	\begin{equation}
		{\tilde\theta}=\left[\begin{array}{c}\theta_{11}\\\theta_{12}\\\theta_{13}\end{array}\right] \Rightarrow \theta=\mathrm {spar}({\tilde\theta})|_{r}=\left[\begin{array}{c}\theta_{11}\\0\\\theta_{12}\\0\\\theta_{13}\\0\end{array}\right].
	\end{equation}
\end{definition}
\begin{lemma}
	\label{nons}
	\cite{BIERKENS2014191} For a singular matrix $\cal M$ with a simple zero eigenvalue, let $\nu$ and $\omega$ be its right and left eigenvectors, respectively. Then, if and only if there exist two vectors $x$ and $y$ such that $(\nu^\mathrm{T}x)(y^\mathrm{T}\omega)\neq0$, ${\cal M}+xy^\mathrm{T}$ is non-singular.
\end{lemma}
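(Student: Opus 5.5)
The statement is Lemma~\ref{nons}: if ${\cal M}$ has a simple zero eigenvalue with right eigenvector $\nu$ and left eigenvector $\omega$, then ${\cal M}+xy^\mathrm{T}$ is non-singular exactly when $(\nu^\mathrm{T}x)(y^\mathrm{T}\omega)\neq0$. In this statement the roles of $\nu$ and $\omega$ in the pairings are transposed relative to the usual form. The natural reading is that the vector paired with $x$ spans the left null space and the vector paired with $y$ spans the right null space. I would fix the convention ${\cal M}\omega=0$ and $\nu^\mathrm{T}{\cal M}=0$, so that the pairings $\nu^\mathrm{T}x$ and $y^\mathrm{T}\omega$ are the meaningful ones, and state this explicitly. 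Simplicity of the zero eigenvalue (algebraic multiplicity one) gives $\nu^\mathrm{T}\omega\neq0$. I would justify this by the standard argument: if $\nu^\mathrm{T}\omega=0$, then $\omega\in\mathrm{range}({\cal M})$, and a Jordan chain of length two would exist.

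\textbf{Sufficiency.} Suppose $({\cal M}+xy^\mathrm{T})z=0$.
\begin{enumerate}
\item Multiply on the left by $\nu^\mathrm{T}$. This gives $(\nu^\mathrm{T}x)(y^\mathrm{T}z)=0$, so $y^\mathrm{T}z=0$ because $\nu^\mathrm{T}x\neq0$.
\item Then ${\cal M}z=0$. Since the kernel is one-dimensional, $z=c\,\omega$ for some scalar $c$.
\item Now $0=y^\mathrm{T}z=c\,(y^\mathrm{T}\omega)$, and $y^\mathrm{T}\omega\neq0$, so $c=0$ and $z=0$.
\end{enumerate}
Hence ${\cal M}+xy^\mathrm{T}$ is non-singular.

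\textbf{Necessity,} by contrapositive. If $y^\mathrm{T}\omega=0$, then $({\cal M}+xy^\mathrm{T})\omega={\cal M}\omega+x(y^\mathrm{T}\omega)=0$ with $\omega\neq0$, so the matrix is singular. If $\nu^\mathrm{T}x=0$, then $\nu^\mathrm{T}({\cal M}+xy^\mathrm{T})=\nu^\mathrm{T}{\cal M}+(\nu^\mathrm{T}x)y^\mathrm{T}=0$ with $\nu\neq0$, so the matrix is again singular. An alternative route is the determinant identity $\det({\cal M}+xy^\mathrm{T})=\det{\cal M}+y^\mathrm{T}\mathrm{adj}({\cal M})x$. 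For a rank $\mathrm{n}-1$ matrix, $\mathrm{adj}({\cal M})=\kappa\,\omega\nu^\mathrm{T}$, and $\kappa\neq0$ because the zero eigenvalue is simple. This yields $\det({\cal M}+xy^\mathrm{T})=\kappa(y^\mathrm{T}\omega)(\nu^\mathrm{T}x)$ and both directions at once. I would present the elementary kernel argument and mention the adjugate formula as a check.

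\textbf{Main obstacle.} The only delicate point is bookkeeping, not analysis. One must match which eigenvector (left or right) pairs with $x$ and which with $y$. One must also note that sufficiency uses only that $\ker{\cal M}$ is one-dimensional, which holds since the simple zero eigenvalue has geometric multiplicity at most its algebraic multiplicity, namely one. Under the paper's literal labeling, the argument is the same after interchanging the names $\nu$ and $\omega$.
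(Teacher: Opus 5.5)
The paper gives no proof of this lemma; it is imported from the cited reference, so there is no internal argument to compare against. Your kernel argument is correct and complete. Multiplying by the left null vector forces $y^\mathrm{T}z=0$. One-dimensionality of $\ker{\cal M}$ then forces $z\in\mathrm{span}(\omega)$, and $y^\mathrm{T}\omega\neq0$ gives $z=0$. The converse, which exhibits $\omega$ and $\nu^\mathrm{T}$ as null vectors of the perturbed matrix, is also right. The adjugate identity $\det({\cal M}+xy^\mathrm{T})=y^\mathrm{T}\mathrm{adj}({\cal M})x=\kappa(y^\mathrm{T}\omega)(\nu^\mathrm{T}x)$ is a valid one-line alternative that gives both directions at once.

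\textbf{The relabeling is a correction, not a convention.} With $\nu$ a right and $\omega$ a left null vector, the statement as printed is false. Take ${\cal M}=\begin{bmatrix}0&1\\0&1\end{bmatrix}$. Then ${\cal M}\nu=0$ for $\nu=(1,0)^\mathrm{T}$ and $\omega^\mathrm{T}{\cal M}=0$ for $\omega=(1,-1)^\mathrm{T}$. With $x=(1,1)^\mathrm{T}$ and $y=(1,0)^\mathrm{T}$ you get $(\nu^\mathrm{T}x)(y^\mathrm{T}\omega)=1$, yet ${\cal M}+xy^\mathrm{T}=\begin{bmatrix}1&1\\1&1\end{bmatrix}$ is singular. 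Your closing sentence should therefore say that, under the paper's labeling, the correct condition is $(\omega^\mathrm{T}x)(y^\mathrm{T}\nu)\neq0$. It should not suggest that the printed statement follows by renaming. In the paper's use in Theorem~\ref{TH1}, $x$ and $y$ are both multiples of $\mathrm{b}$, so the two pairings coincide and the slip is harmless there.

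\textbf{The fact $\nu^\mathrm{T}\omega\neq0$ is never used.} Neither it nor its Jordan-chain justification enters the proof. Sufficiency needs only $\dim\ker{\cal M}=1$, and necessity needs nothing beyond nonzero null vectors. So the lemma holds for every matrix of rank $\mathrm{n}-1$, even without a simple zero eigenvalue. For example, ${\cal M}=\begin{bmatrix}0&1\\0&0\end{bmatrix}$ has $\omega=(1,0)^\mathrm{T}$ and $\nu=(0,1)^\mathrm{T}$. Here $\nu^\mathrm{T}\omega=0$, yet $\det({\cal M}+xy^\mathrm{T})=-(\nu^\mathrm{T}x)(y^\mathrm{T}\omega)$. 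Similarly, $\kappa\neq0$ in the adjugate route follows from $\mathrm{rank}\,{\cal M}=\mathrm{n}-1$ alone. You can drop that step.
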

\begin{lemma}
	\label{Mm}
	\cite{BIERKENS2014191} Base on Lemma \ref{nons}, the matrix ${\cal M}+xy^\mathrm{T}>0$ if ${\cal M}$ is an irreducible M-matrix with $\nu>0$, $\omega>0$, $x\le0$, and $y\le 0$.
\end{lemma}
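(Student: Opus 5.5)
We read ``${\cal M}+xy^\mathrm{T}>0$'' as saying that every eigenvalue of ${\cal M}+xy^\mathrm{T}$ has positive real part, i.e.\ the perturbed matrix is nonsingular and positive stable. We also assume $x\neq0$ and $y\neq0$, since otherwise nothing changes. The plan has three parts: nonsingularity from Lemma \ref{nons}, a first-order computation showing that the zero eigenvalue moves into the open right half-plane, and a continuation argument along the segment ${\cal M}+\tau xy^\mathrm{T}$, $\tau\in[0,1]$.

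\emph{Nonsingularity.} Write ${\cal M}=sI-N$ with $N\ge0$ irreducible and $s=\rho(N)$. By Theorem \ref{THPF}, $0$ is a simple eigenvalue of ${\cal M}$, with right eigenvector $\nu>0$ and left eigenvector $\omega>0$. All other eigenvalues satisfy $\mathrm{Re}\,\lambda>0$, because they equal $s-\mu$ with $\mu\neq s$ an eigenvalue of $N$, and $\mathrm{Re}\,\mu\le|\mu|\le s$, with equality $\mu=s$ excluded by simplicity.
\begin{itemize}
\item Since $x\le0$, $x\neq0$ and $\nu>0$, we get $\nu^\mathrm{T}x<0$.
\item Likewise $y\le0$, $y\neq0$ and $\omega>0$ give $y^\mathrm{T}\omega<0$.
\item Hence $(\nu^\mathrm{T}x)(y^\mathrm{T}\omega)>0$. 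The same holds with $x$ replaced by $\tau x$ for any $\tau>0$, so Lemma \ref{nons} makes ${\cal M}+\tau xy^\mathrm{T}$ nonsingular for every $\tau>0$.
\end{itemize}

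\emph{The zero eigenvalue moves right.} Simple-eigenvalue perturbation theory gives the derivative of the branch $\lambda(\tau)$ leaving $0$:
\[
\lambda'(0)=\frac{(\omega^\mathrm{T}x)(y^\mathrm{T}\nu)}{\omega^\mathrm{T}\nu}.
\]
Here $\omega^\mathrm{T}x<0$ and $y^\mathrm{T}\nu<0$, while $\omega^\mathrm{T}\nu>0$, so $\lambda'(0)>0$. This computation needs the products $\omega^\mathrm{T}x$ and $y^\mathrm{T}\nu$, which are not exactly the pairing written in Lemma \ref{nons}. The sign information is identical for either pairing, because all four vectors have constant sign. Consequently, for small $\tau>0$ the whole spectrum of ${\cal M}+\tau xy^\mathrm{T}$ lies in the open right half-plane.

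\emph{Continuation to $\tau=1$ (main obstacle).} Eigenvalues depend continuously on $\tau$, so the spectrum can leave the right half-plane only by crossing the imaginary axis. Lemma \ref{nons} only excludes a crossing at $0$; it does not exclude crossings at $i\beta$ with $\beta\neq0$. To rule those out I would first try the matrix determinant lemma. For $\beta\neq0$, ${\cal M}-i\beta I$ is invertible, and
\[
\det({\cal M}+\tau xy^\mathrm{T}-i\beta I)=\det({\cal M}-i\beta I)\bigl(1+\tau\,y^\mathrm{T}({\cal M}-i\beta I)^{-1}x\bigr).
\]
So it suffices to show $\mathrm{Re}\bigl(y^\mathrm{T}({\cal M}-i\beta I)^{-1}x\bigr)\ge0$, which makes the second factor nonzero for $\tau\ge0$. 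I would attack this via the diagonal similarity $D=\mathrm{diag}(\nu)$: $D^{-1}{\cal M}D$ has zero row sums, and one can then use a Lyapunov-type weight built from $\omega\circ\nu$ to obtain an accretivity estimate for the resolvent against sign-definite vectors.

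If that estimate fails in general, there is a fallback, which proves only a weaker conclusion than positive stability. Assume the stronger sign pattern $x\le0$, $y\le0$ with $xy^\mathrm{T}$ supported where the application needs it, namely pinning on the diagonal, as with the gain $\mathrm{b}_j$ on flexible agents. Then ${\cal M}+xy^\mathrm{T}$ remains a Z-matrix. It is nonsingular by the first part, and it dominates the singular irreducible M-matrix ${\cal M}$ entrywise on the diagonal. It is therefore a nonsingular M-matrix, so its inverse is entrywise positive, and $>0$ holds in that sense. It is not yet shown that this inverse-positivity implies positive stability, so this route establishes only the weaker inverse-positive reading of ``$>0$''.
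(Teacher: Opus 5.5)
Your nonsingularity step and your first-order computation $\lambda'(0)>0$ are fine. The gap is at the continuation step you flagged, and it cannot be closed: read as positive stability, the statement is false. Take the directed $4$-cycle, ${\cal M}=\mathrm{I}_4-P$ with $Pe_k=e_{k+1}$ (indices mod $4$). This is a singular irreducible M-matrix with $\nu=\omega=\mathbf{1}>0$. Let $x=-e_1\le0$ and $y=-\tau e_3\le0$ with $\tau>0$. With $s=1-\lambda$, a cofactor expansion gives
\[
\det\bigl({\cal M}+xy^\mathrm{T}-\lambda\mathrm{I}_4\bigr)=s^4+\tau s-1 .
\]
For $\tau=4\sqrt5$ this vanishes at $s=1\mp i\beta$ with $\beta=\sqrt{1+\sqrt5}$. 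So ${\cal M}+xy^\mathrm{T}$ has the eigenvalues $\pm i\beta$ on the imaginary axis. For $\tau=20$ it has the pair $\lambda\approx-0.34\pm2.35i$ in the open left half-plane. Equivalently, $e_3^\mathrm{T}({\cal M}-i\beta\mathrm{I}_4)^{-1}e_1=-1/(4\sqrt5)<0$, so the bound $\mathrm{Re}\bigl(y^\mathrm{T}({\cal M}-i\beta\mathrm{I})^{-1}x\bigr)\ge0$ you hoped to prove fails. Your fallback does not repair this. A rank-one matrix supported on the diagonal is a single entry $c\,e_ae_a^\mathrm{T}$, so it covers only a degenerate case. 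There your worry is also unnecessary: a nonsingular M-matrix is positive stable by a standard characterization.

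The paper gives no proof of its own; it quotes \cite{BIERKENS2014191}. In the proof of Theorem \ref{TH1} it uses the conclusion as ``$\mathrm{F}$ has all positive leading minors.'' The intended meaning of ``${\cal M}+xy^\mathrm{T}>0$'' is therefore the P-matrix property (all principal minors positive). Of that, your argument delivers only $\det({\cal M}+xy^\mathrm{T})\neq0$. The missing step is to apply the determinant lemma to every principal submatrix.

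For a proper index set $\alpha$, ${\cal M}_{\alpha\alpha}$ is a nonsingular M-matrix because ${\cal M}$ is irreducible and singular. Hence $\det{\cal M}_{\alpha\alpha}>0$ and ${\cal M}_{\alpha\alpha}^{-1}\ge0$. Since $x_\alpha,y_\alpha\le0$, this gives $\det({\cal M}_{\alpha\alpha}+x_\alpha y_\alpha^\mathrm{T})=\det({\cal M}_{\alpha\alpha})\bigl(1+y_\alpha^\mathrm{T}{\cal M}_{\alpha\alpha}^{-1}x_\alpha\bigr)>0$.

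For the full matrix, $\mathrm{adj}({\cal M})=c\,\nu\omega^\mathrm{T}$. Here $c>0$, because the diagonal entries of $\mathrm{adj}({\cal M})$ are principal minors of order $\mathrm{n}-1$, which are positive. So $\det({\cal M}+xy^\mathrm{T})=y^\mathrm{T}\mathrm{adj}({\cal M})x=c\,(y^\mathrm{T}\nu)(\omega^\mathrm{T}x)>0$.

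The example above has $\det=\tau>0$ and is a P-matrix, yet it is not positive stable. So the P-matrix conclusion genuinely does not upgrade to stability for general rank-one $xy^\mathrm{T}$. Stability is available only when the perturbation keeps a Z-matrix, as with the diagonal pinning term in Theorem \ref{TH1}.
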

\begin{lemma}
	\label{MH}
	\cite{doi:10.1137/20M131802X} A M-matrix ${\cal M}$ is Hurwitz if and only if $-{\cal M}>0$.
\end{lemma}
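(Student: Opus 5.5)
We read the statement in the standard sense used for consensus-type controllers. ${\cal M}$ is a Z-matrix (nonpositive off-diagonal entries), so it can be written as ${\cal M}=s\mathrm{I}_\mathrm{n}-N$ with $N\ge 0$ and $s\ge\max_i {\cal M}_{ii}$. The claim is then that $-{\cal M}$ is Hurwitz if and only if ${\cal M}$ is a nonsingular M-matrix, i.e. if and only if ${\cal M}^{-1}\ge 0$. When ${\cal M}$ is irreducible, $-{\cal M}$ is Hurwitz if and only if ${\cal M}^{-1}>0$. This is how the condition ``$-{\cal M}>0$'' will be used later, namely as a positivity condition tied to the inverse. The proof reduces everything to the comparison between $s$ and the spectral radius $\rho(N)$.

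\emph{Step 1 (spectral reduction).} Every eigenvalue of $-{\cal M}$ has the form $\lambda-s$ with $\lambda$ an eigenvalue of $N$. By the Perron--Frobenius theory (Theorem~\ref{THPF} in the irreducible case, and its nonnegative-matrix extension by a limiting argument otherwise), $\rho(N)$ is itself an eigenvalue of $N$. Moreover, $\mathrm{Re}\,\lambda\le|\lambda|\le\rho(N)$ for every eigenvalue $\lambda$ of $N$. Hence $\max\mathrm{Re}\,\sigma(-{\cal M})=\rho(N)-s$, so $-{\cal M}$ is Hurwitz exactly when $\rho(N)<s$.

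\emph{Step 2 (sufficiency of $\rho(N)<s$ for positivity of the inverse).} If $\rho(N)<s$, the Neumann series converges and gives
\[
{\cal M}^{-1}=s^{-1}\sum_{k\ge 0}(N/s)^k\ge 0 .
\]
If $N$ is irreducible, then for every pair $(i,j)$ some power $N^k$ has a positive $(i,j)$ entry. Therefore every entry of the series is positive and ${\cal M}^{-1}>0$.

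\emph{Step 3 (converse).} Assume ${\cal M}$ is nonsingular with ${\cal M}^{-1}\ge 0$. Take a nonzero nonnegative Perron vector $v$ of $N^\mathrm{T}$, so that ${\cal M}^\mathrm{T}v=(s-\rho(N))v$. Multiplying by ${\cal M}^{-\mathrm{T}}$ gives
\[
v^\mathrm{T}=(s-\rho(N))\,v^\mathrm{T}{\cal M}^{-1}.
\]
The vector $v^\mathrm{T}{\cal M}^{-1}$ is nonnegative, and it is nonzero because ${\cal M}^{-1}$ is nonsingular. Comparing a positive component on both sides forces $s-\rho(N)>0$. Step 1 then yields that $-{\cal M}$ is Hurwitz.

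The main obstacle is Step 1 in the reducible case, where Theorem~\ref{THPF} does not apply directly. To show that $\rho(N)$ is an eigenvalue of $N$, I would apply Theorem~\ref{THPF} to $N+\epsilon\mathbf{1}\mathbf{1}^\mathrm{T}$, which is positive and hence irreducible, and then let $\epsilon\to 0$. This uses continuity of the spectrum and compactness of the normalized Perron vectors. The normalized limit vector is a nonzero nonnegative eigenvector of $N$ (and, for Step 3, of $N^\mathrm{T}$) for the eigenvalue $\rho(N)$. This is the only nonelementary ingredient; the rest is the Neumann-series computation.
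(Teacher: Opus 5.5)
The paper gives no proof of Lemma~\ref{MH}; it only quotes it from the cited reference, so there is no route to compare against. Read literally, the statement cannot hold. An M-matrix has nonnegative diagonal, so $-{\cal M}>0$ entrywise is impossible, and its eigenvalues have nonnegative real parts, so it is never Hurwitz. Some repair is therefore unavoidable. For the repair you chose, your proof is correct. That repair says: for a Z-matrix ${\cal M}$, the matrix $-{\cal M}$ is Hurwitz if and only if ${\cal M}^{-1}\ge 0$, with ${\cal M}^{-1}>0$ in the irreducible case. Step~1 correctly identifies $s-\rho(N)$ as the eigenvalue of ${\cal M}$ with smallest real part, and the perturbation $N+\epsilon\mathbf{1}\mathbf{1}^\mathrm{T}$ handles the reducible case. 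The Neumann series gives nonnegativity of ${\cal M}^{-1}$, and strict positivity under irreducibility. The left-Perron-vector comparison in Step~3 is sound.

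What you should not claim is that this is how ``$-{\cal M}>0$'' is used later. The lemma is applied in the proof of Theorem~\ref{TH1}. There the paper asserts, via Lemma~\ref{Mm}, that $\mathrm{F}=-\rho({\cal A})(\mathrm{A}-\mathrm{B})$ ``has all positive leading minors, i.e., is positive''. It then invokes Lemma~\ref{MH} to conclude that $\mathrm{A}-\mathrm{B}$ is Hurwitz. No information about $\mathrm{F}^{-1}$ is ever produced. So the implication actually needed is that a Z-matrix with positive principal minors has Hurwitz negative; here your ${\cal M}$ is $\mathrm{F}/\rho({\cal A})$. Your proposal never connects minors to the spectrum, so as written it does not plug into Theorem~\ref{TH1}.

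The bridge is short and reuses your Step~1. Suppose all principal minors of ${\cal M}$ are positive. Then $\det({\cal M}+t\mathrm{I})=\sum_{S}t^{\mathrm{n}-|S|}\det{\cal M}[S]>0$ for every $t\ge 0$, where the sum runs over index sets $S$ and $\det{\cal M}[\emptyset]=1$. Hence the real eigenvalue $s-\rho(N)$ of ${\cal M}$ cannot lie in $(-\infty,0]$, i.e., $\rho(N)<s$. Conversely, $\rho(N[S])\le\rho(N)<s$ makes each $\det{\cal M}[S]$ a product of positive reals and conjugate pairs, hence positive.

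If you want to assume only positive leading minors, as the paper literally writes, induct on the leading blocks. By induction ${\cal M}_{11}^{-1}\ge 0$, the Schur complement $\det{\cal M}/\det{\cal M}_{11}$ is positive, and the off-diagonal blocks are nonpositive. The block-inverse formula then gives ${\cal M}^{-1}\ge 0$, and your Step~3 finishes.
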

\section{Design of Distributed Cluster ED Scheme}
\par It needs to be emphasized again that the next problem to be addressed is the LFCC control of MCs according to (\textit{P1.1}) in Problem \ref{p1}. Specifically, in the scheme to be designed, information about their expected leader, namely aggregate electricity price, is calculated by the VPP and can be only accessed by a cluster.
Unlike before, which are designed based on MC consensus, this approach aims to find the communication weights required to achieve MC LFCC under prescribed aggregate weights.
\par Here, in order to express the proposal more clearly, it is necessary to define some terms with similar meanings. Firstly, a group represents the set of all DGs aggregated by a VPP. Secondly, a subgrid refers to a power grid consisting of load and a DG group. Finally, a cluster represents the agents that manage all DGs/DESSs in a subgrid.
\par For convenience, a necessary assumption about communication network is made here. Therein, communication topologies of intra- and inter- cluster are described in Assumption \ref{assu0}.
\begin{assumption}
	\label{assu0}
	In this paper, in any cluster, each agent communicates with its neighbors over a directed and connected graph. Between clusters, agents communicate with each other over a directed and connected graph.
\end{assumption}
\par Furthermore, the transmission mechanism of data packets collected from each subgrid, i.e., the MCs and average power mismatch information of each subgrid, on different communication links is explained by Definition \ref{assu3}, which is further deduced as Corollary \ref{assu1}.
\begin{definition}
	\label{assu3}
	Under Assumption \ref{assu0}, the average power mismatch information of each subgrid can not be disclosed to the agents managing DGs in other subgrids. That is to say, for the neighbors of agent $i$, the information packets they receive from agent $i$ may be different, which is explained by taking Fig. \ref{vpp3} as an example, where agents 5, 9, and 13 are flexible agents of cluster 1, 2 and 3, respectively.
\end{definition}
\begin{figure}
	\centering
	\includegraphics[width=8cm]{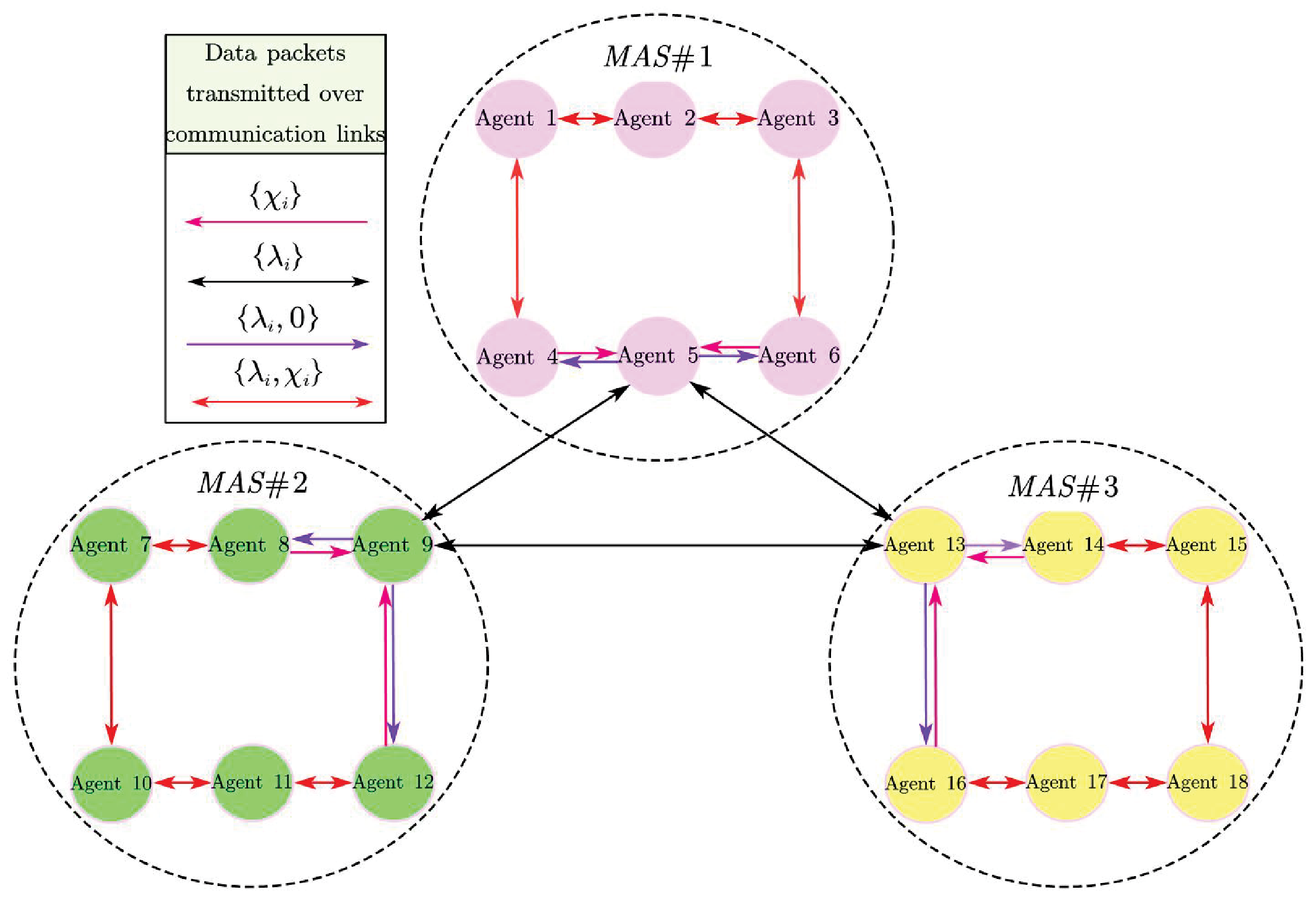}
	\caption{Data packets transmitted over different communication links.}
	\label{vpp3}
\end{figure}
\begin{corollary}
	\label{assu1}
	Based on Definitions \ref{de1} and \ref{assu3}, each flexible agent is located at the root node of the graph corresponding to the communication topology within a cluster.
\end{corollary}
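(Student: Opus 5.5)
The plan is to argue by contradiction, using only the structural facts in Definitions \ref{de1} and \ref{assu3} and Assumption \ref{assu0}. Fix a cluster ${\cal C}_j$ and let $i_f=\mathrm{n}_{j-1}+1$ be its unique flexible agent. A root node of the directed graph ${\cal G}_j$ is a vertex from which every other vertex of ${\cal V}_j$ can be reached by a directed path that respects the information-flow convention: an edge $(\mathrm{v}_{i_1},\mathrm{v}_{i_2})$ carries information from $\mathrm{v}_{i_2}$ to $\mathrm{v}_{i_1}$. So the goal is to show that $i_f$ has such a path to every agent of ${\cal C}_j$.

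The first step records what Definition \ref{de1} gives. Only $i_f$ receives information from outside ${\cal C}_j$, which includes inter-cluster MCs and the virtual leader's aggregate price when $\mathrm{b}_j=1$. Every other agent in ${\cal C}_j$ has in-neighbors only inside ${\cal V}_j$. By Assumption \ref{assu0}, the inter-cluster graph is connected, and clusters are coupled only through flexible agents. Hence exogenous information, in particular the expected aggregate price $\sum_k w_{jk}\rho_k$ required by (\textit{P1.1}), can enter ${\cal C}_j$ only at $i_f$. Definition \ref{assu3} adds that the subgrid's power-mismatch data stay inside the cluster, so $i_f$ is the only gateway between ${\cal C}_j$ and the rest of the network.

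The second step is the core. Suppose some $\mathrm{v}\in{\cal V}_j$ is not reachable from $i_f$ in ${\cal G}_j$. Let $S\subset{\cal V}_j$ be the set of agents that $\mathrm{v}$ can receive information from within the cluster, i.e.\ the vertices with a directed path to $\mathrm{v}$. Then $i_f\notin S$. Moreover, $S$ has no in-edges from ${\cal V}_j\setminus S$: if some vertex outside $S$ sent information to a member of $S$, it would itself have a path to $\mathrm{v}$ and so belong to $S$. Since $i_f\notin S$, $S$ also has no in-edges from other clusters or from the leader. So $S$ forms an isolated subsystem that never sees exogenous information. This contradicts the requirement, built into the standing connectivity of Assumption \ref{assu0} and the communication design the paper uses, that every agent of ${\cal C}_j$ be influenced by the rest of the network so that its MC can be driven to $\sum_k w_{jk}\rho_k$. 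Therefore every vertex is reachable from $i_f$, and $i_f$ is a root of ${\cal G}_j$.

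The main obstacle is interpretive. "Directed and connected" in Assumption \ref{assu0} is not formally defined in the paper, and weak connectivity alone does not force $i_f$ to be a root. I would make explicit that "connected" for ${\cal G}_j$ is read as every agent being reachable from the external input point of the cluster, equivalently that ${\cal G}_j$ together with its incoming inter-cluster edges contains a spanning tree. With that reading, the cut argument above completes the proof.
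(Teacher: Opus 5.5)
The paper gives no derivation for this corollary; it is simply asserted after Definition \ref{assu3}. So the question is whether your argument stands on its own. It does not, because both of its load-bearing steps presuppose the conclusion.

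In your second step, the in-closed set $S$ with $i_f\notin S$ contradicts none of the stated hypotheses. Definitions \ref{de1} and \ref{assu3} only constrain where external information enters a cluster and which packets may leave it. They say nothing about reachability inside ${\cal G}_j$. Take a cluster $\{\mathrm{v}_{i_f},\mathrm{v}_a\}$ whose only edge is $(\mathrm{v}_{i_f},\mathrm{v}_a)$, so the flexible agent listens to $\mathrm{v}_a$ but not conversely. It satisfies both definitions and a weak-connectivity reading of Assumption \ref{assu0}. Yet $S=\{\mathrm{v}_a\}$ is isolated and $i_f$ is not a root.

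What such an $S$ actually contradicts is that every MC in ${\cal C}_j$ is driven to $\sum_k w_{jk}\rho_k$, which is the conclusion of Theorem \ref{TH1}. But the final step of the proof of Theorem \ref{TH1}, and the first step of the proof of Theorem \ref{TH21}, invoke this very corollary. The reasoning is therefore circular.

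Your fallback reading of ``connected'' does not repair this. By Definition \ref{de1}, the cluster's external input point is $i_f$. So ``every agent is reachable from the external input point'' is verbatim the claim that $i_f$ is a root of ${\cal G}_j$. Your spanning-tree variant is equivalent for the same reason, since all incoming inter-cluster edges land at $i_f$. Under either reading the cut argument does no work.

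The accurate conclusion is that the statement cannot be derived from Definitions \ref{de1} and \ref{assu3}; it is a standing topological hypothesis. If ``directed and connected'' means strongly connected, every vertex of ${\cal G}_j$ is a root and the claim is immediate, with no contradiction needed. That reading, however, clashes with the paper's own example. There the first row of ${\tilde{\cal L}}$ is zero while agents $2$--$4$ all listen to agent $1$. The MC-layer graph is therefore not strongly connected, and agent $1$ is its unique root.

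Your cut argument is a faithful formalization of the paper's implicit intuition that external information can enter a cluster only through its flexible agent. You were right to flag that weak connectivity is insufficient. The correct finish is to state explicitly, as an assumption, that ${\cal G}_j$ contains a directed spanning tree rooted at the flexible agent. That is exactly what Theorems \ref{TH1} and \ref{TH21} use, and it should be stated as such rather than presented as a completed proof.
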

\subsection{Design of Distributed LFCC for MCs}
\par According to (\textit{P1.2}) in Problem \ref{p1}, MCs of DGs in a large-scale MG should achieve multiconsensus to solve its ED problem, and their different consensus values depend on the aggregation coefficients defined by the VPP. In order to solve (\textit{P1.2}) in Problem \ref{p1}, a distributed controller over a directed and connected communication topology is designed in this section, and based on the centrality of the eigenvectors \cite{VVG2017} of the adjacency matrix, a matching communication weight matrix is designed to induce the LFCC of MCs, thus leading to the predetermined aggregation control.
\par For flexible agent $i_1$ and $i_2$, $\forall i_1,i_2\in{\cal F}$, where $\forall i_1\in{\cal F}$ and ${\cal F}\stackrel{\triangle}{=}\{1,\mathrm{n}_1+1,\cdots,\mathrm{n}_\mathrm{m-1}+1\}$, its MC control scheme is
\begin{equation}
	\label{MC}
	\dot \lambda_{i_1}={\cal K}_1(\frac{1}{\rho({\cal A})}\sum_{i_2\in {\cal F}}\mathrm{a}_{j_1j_2}\omega_{j_1j_2}\lambda_{i_2}-\lambda_{i_1})+u_{i_1},
\end{equation}
where ${\cal K}_1$ is a control gain, $u_{i_1}$ denotes the pinning control input, ${\cal A}=[\mathrm{a}_{j_1j_2}]$ denotes the adjacency matrix corresponding to the communication topology among flexible agents, and $\rho({\cal A})$ denotes the spectrum of ${\cal A}$, $\Omega=[\omega_{j_1j_2}]$ is the communication weight matrix to be deigned. The proposed pinning control input is designed as
\begin{equation}
	\label{ci1}
	u_{i_1}={\cal K}_1\mathrm{b}_{i_1}(W_{j_1}\rho-\lambda_{i_1}),
\end{equation}
where $W_{j_1}\rho$ represents the aggregate electricity pricing strategy formulated by the VPP, DG $i_1$ belongs to subgrid $j$, $W_{j_1}$ is the $j_1$-th row of $W=[w_{j_1j_2}]$, $\mathrm{B}=\mathrm{diag}(\mathrm{b}_1,\cdots,\mathrm{b}_{\mathrm{m}})$. Then, a compact form of \eqref{MC} is
\begin{equation}
	\label{lam1}
	\dot \lambda_{\cal F}={\cal K}_1\mathrm{A}\lambda_{\cal F}+u,
\end{equation}
where $\lambda_{\cal F}=\mathrm{col}(\lambda_{i_1})_{i_1\in{\cal F}}$,  $\mathrm{A}=\frac{1}{\rho({\cal A})}({\cal A}\circ \Omega)-\mathrm{I}_{\mathrm{n}}$, $u=\mathrm{col}(u_{i_1})_{i_1\in{\cal F}}$. The pinning control input vector is
\begin{equation}
	\label{ci}
	u={\cal K}_1\mathrm{B}(W\rho-\lambda_{\cal F}).
\end{equation} 
\par For other agents in cluster $j$, such as agent $i_1$, $\forall i_1\in\{\mathrm{n}_{j-1}+2,\cdots,\mathrm{n}_j\}$, its MC control scheme is
\begin{equation}
	\label{MC0}
	\dot \lambda_{i_1}={\cal K}_1\sum_{i_2\in{\cal C}_j}\mathrm{{\tilde a}}_{i_1i_2}(\lambda_{i_2}-\lambda_{i_1}),
\end{equation}
where ${\tilde {\cal A}}_j=[\mathrm{{\tilde a}}_{i_1i_2}]$ denotes the adjacency matrix corresponding to the communication topology among agents belonging to ${\cal C}_j$.
Then, a compact form of \eqref{MC0} is
\begin{equation}
	\label{lam0}
	\dot {\tilde\lambda}_j=-{\cal K}_1{\tilde {\cal L}}_j{\tilde\lambda}_j,
\end{equation}
where ${\tilde\lambda}_j=\mathrm{col}(\lambda_{i_1})_{i_1\in{\cal C}_j}$, ${\tilde {\cal L}}_j$ is the Laplacian matrix corresponding to cluster $j$. 
\begin{remark}
	From \eqref{lam1} with \eqref{ci}, it can be seen that the next important task is to design a method to obtain a matching communication weight matrix to reach the desired LFCC. As a complex cyber physical system, a reasonable communication network need to be designed to solve the cluster ED problem of a large-scale MG. In other words, because the collaborative effect of DG groups heavily relies on communication infrastructure, the characteristics of communication networks usually determine the steady-state of cluster networks, such as consensus and multiconsensus. So, starting from a graph theory perspective, we use matrix theory and the centrality of eigenvectors\cite{VVG2017} to construct an effective method for obtaining the communication weight matrix, ensuring that the expected attributes of the communication network can be assigned, thus achieving MC LFCC of MGs.
\end{remark}
\par Based on Assumption \ref{assu0} and according to Theorem \ref{THPF}, ${\cal A}$ is a non-negative and irreducible matrix. Given a eigenvector $v_1^{\prime}>0$, a lemma and its proof are presented to obtain a suitable weight matrix $\Omega$ and prove that $v_1^{\prime}$ is the leading eigenvector of ${\cal A}\circ \Omega$.
\begin{lemma}
	\label{lemwt}
	A weight matrix $\Omega$ can be found such that $v_1^{\prime}$ is the leading eigenvalue of ${\cal A}\circ \Omega$ associated with $\rho({\cal A})$.
\end{lemma}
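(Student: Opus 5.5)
We need to find $\Omega$ with $({\cal A}\circ\Omega)v_1^{\prime}=\rho({\cal A})v_1^{\prime}$ (the statement says "leading eigenvalue", but it means leading eigenvector $v_1^{\prime}$ with eigenvalue $\rho({\cal A})$). We also need $\Omega\ge 0$ with positive entries on the edges of ${\cal A}$, so that ${\cal A}\circ\Omega$ keeps the sparsity pattern of ${\cal A}$ and stays irreducible.

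The plan is to turn the eigen-equation into a linear system in the unknown weights and then invoke Lemma \ref{lemeq} (Farkas) to get a positive solution. Concretely, write $\theta=\mathrm{vec}({\cal A}\circ\Omega)$. Since $\mathrm{vec}(XY)=(Y^\mathrm{T}\otimes \mathrm{I})\mathrm{vec}(X)$, the condition $({\cal A}\circ\Omega)v_1^{\prime}=\rho({\cal A})v_1^{\prime}$ becomes
$$(v_1^{\prime\mathrm{T}}\otimes \mathrm{I}_\mathrm{m})\,\theta=\rho({\cal A})v_1^{\prime}.$$
Entries of $\theta$ at non-edges of ${\cal A}$ are forced to zero. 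Let $r$ be the index set of those zeros, put $\tilde\theta=\mathrm{den}(\theta)|_r$, and let $\Theta$ be the corresponding column selection of $v_1^{\prime\mathrm{T}}\otimes \mathrm{I}_\mathrm{m}$. We then need $\tilde\theta>0$ with $\Theta\tilde\theta=\varphi:=\rho({\cal A})v_1^{\prime}$.

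Existence of such a solution can be argued in two ways.

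\emph{Constructive route.} Row $j_1$ of $\Theta\tilde\theta=\varphi$ involves only the weights on edges entering $j_1$, so the rows are decoupled. Each row reads $\sum_{j_2}\mathrm{a}_{j_1j_2}\omega_{j_1j_2}v^{\prime}_{1,j_2}=\rho({\cal A})v^{\prime}_{1,j_1}$. By Assumption \ref{assu0} each vertex has at least one in-neighbour, so this row has at least one positive coefficient $v^{\prime}_{1,j_2}>0$ and a positive right-hand side. The explicit choice
$$\omega_{j_1j_2}=\frac{\rho({\cal A})\,v^{\prime}_{1,j_1}}{\mathrm{d}_{j_1}\,v^{\prime}_{1,j_2}}$$
on edges is therefore positive and solves the system. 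This is the step I would actually use.

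\emph{Farkas route.} To present the argument via Lemma \ref{lemeq}, show that alternative 2 fails. If $\Theta^\mathrm{T}y\ge0$, then $y_{j_1}v^{\prime}_{1,j_2}\ge0$ for every edge $(j_1,j_2)$, so $y\ge0$ on every vertex having an in-neighbour, i.e. on all vertices. Hence $y^\mathrm{T}\varphi\ge0$ can only be ruled out if strict inequalities are intended. I would state the lemma's alternative in its correct strict form ($y^\mathrm{T}\varphi<0$), which is clearly impossible, so alternative 1 holds. Either way we get $\tilde\theta>0$. We then recover $\theta=\mathrm{spar}(\tilde\theta)|_r$ and ${\cal A}\circ\Omega=\mathrm{vec}^{-1}(\theta)$, and read off $\Omega$ entrywise (entries off the edge set are arbitrary, e.g. $1$).

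Finally, we check that $\rho({\cal A})$ is the leading eigenvalue. ${\cal A}\circ\Omega$ is nonnegative with the same zero pattern as ${\cal A}$, hence irreducible. It has the positive eigenvector $v_1^{\prime}$ with eigenvalue $\rho({\cal A})$. By Theorem \ref{THPF} (Perron–Frobenius, together with the standard fact that the only eigenvector of an irreducible nonnegative matrix that can be positive is the Perron one), $\rho({\cal A}\circ\Omega)=\rho({\cal A})$ and $v_1^{\prime}$ is its leading eigenvector, unique up to scaling.

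The main obstacle is this last uniqueness claim, since Theorem \ref{THPF} as quoted does not say that a positive eigenvector forces the Perron eigenvalue. I would supply it with a one-line argument. Let $w>0$ be the left Perron vector of ${\cal A}\circ\Omega$. Then $\rho({\cal A}\circ\Omega)\,w^\mathrm{T}v_1^{\prime}=w^\mathrm{T}({\cal A}\circ\Omega)v_1^{\prime}=\rho({\cal A})\,w^\mathrm{T}v_1^{\prime}$, and $w^\mathrm{T}v_1^{\prime}>0$, so the two eigenvalues coincide.
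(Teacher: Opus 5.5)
Your proof is correct. It shares the paper's setup: vectorize the eigen-equation, densify to $\hat M\hat\omega=\rho({\cal A})v_1^{\prime}$, note that each column of $\hat M$ has a single nonzero entry, then sparsify and inverse-vectorize. The existence step, however, is different. The paper invokes Lemma \ref{lemeq} to claim a non-negative solution, and afterwards selects the minimum-norm one \eqref{Wm}. Since the rows of $\hat M$ have disjoint supports, that choice is $\omega_{j_1j_2}=\rho({\cal A})v^{\prime}_{1,j_1}v^{\prime}_{1,j_2}/\sum_{l}\mathrm{a}_{j_1l}(v^{\prime}_{1,l})^2$, which is what generates the weights in \eqref{Wt1}. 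You instead use the row decoupling directly and write down the equal-split solution $\omega_{j_1j_2}=\rho({\cal A})v^{\prime}_{1,j_1}/(\mathrm{d}_{j_1}v^{\prime}_{1,j_2})$.

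Your route supplies rigor that the paper's argument lacks. First, it does not rely on Lemma \ref{lemeq}, whose second alternative as printed is always satisfied by $y=0$ and so certifies nothing. Second, it gives strictly positive weights on every edge, so ${\cal A}\circ\Omega$ is irreducible. The proof of Theorem \ref{TH1} uses this irreducibility, but the lemma's proof, which claims only non-negativity, does not establish it. Third, your left-Perron-vector step actually proves $\rho({\cal A}\circ\Omega)=\rho({\cal A})$, which the paper merely asserts. The paper's route, in turn, gives a canonical choice (minimum norm, or an LP objective as in \eqref{opti}) among the many solutions of the underdetermined system.

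One small correction to your secondary Farkas route: excluding $\Theta^\mathrm{T}y\ge0$, $y^\mathrm{T}\varphi<0$ yields only $\tilde\theta\ge0$. Strict positivity needs the Stiemke-type alternative: no $y$ with $\Theta^\mathrm{T}y\ge0$ and $\varphi^\mathrm{T}y\le0$, not both zero. Your observation $y\ge0$ also excludes this, since then $\rho({\cal A})v_1^{\prime\mathrm{T}}y\le0$ forces $y=0$. Because you rely on the constructive route, this does not affect correctness.
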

\begin{proof}
	If the following holds,
	\begin{equation}
		({\cal A}\circ \Omega)v_1^{\prime}=\rho({\cal A})v_1^{\prime},
	\end{equation}
	$v_1^{\prime}$ is the leading eigenvalue of ${\cal A}\circ \Omega$ associated with $\rho({\cal A})$. The above can be vectorized, by Definition \ref{vec}, into
	\begin{equation}
		\label{M1}
		({v_1^{\prime}}^\mathrm{T}\otimes \mathrm{I}_{\mathrm{m}})(\alpha\circ \omega)=\rho({\cal A})v_1^{\prime},
	\end{equation}
	where $\alpha=\mathrm{vec}({\cal A})$ and $\omega=\mathrm{vec}(\Omega)$. Obviously, $\alpha$ is a binary vector, whose component can only be 0 or 1. By letting $M=({v_1^{\prime}}^\mathrm{T}\otimes \mathrm{I}_{\mathrm{m}})$, $M(\alpha\circ \omega)={\hat M}{\hat\omega}$, where ${\hat\omega}=\mathrm{den}(\alpha\circ\omega)|_r$, ${\hat M}$ is obtained by deleting the columns of $M$ that belong to the set $r$ in the index.
	Then, \eqref{M1} is densified into
	\begin{equation}
		\label{sle}
		{\hat M}{\hat \omega}=\rho({\cal A})v_1^{\prime}.
	\end{equation}
	By finding the solution to the linear equation system above, it can be expanded into a proper matrix $\Omega$ such that $v_1^{\prime}$ is the leading eigenvector associated with $\rho({\cal A})$. 
	\par Thus, according to the Perron–Frobenius theorem in Theorem \ref{THPF}, $v_1^{\prime}>0$, after which is follows that each column of $M$ has only one a non-zero component. According to the construction method of matrix $\hat M$, it can be asserted that each column of $\hat M$, which is a non-negative matrix, has only one non-zero component. Thus, the system of linear equations \eqref{sle} has the following properties:
	\begin{enumerate}
		\item each linear equation only has one unknown variable,
		\item the number of linear equations is not greater than those of unknown variables, which implies \eqref{sle} is determined or underdetermined.
	\end{enumerate}
	However, according to Lemma \ref{lemeq} and $\rho({\cal A})v_1^{\prime}>0$, the system of linear equations \eqref{sle} has a non-negative solution. 
	\par Then, by the inverse vectorization defined in Definition \ref{vec} and the sparsification operation $\omega=\mathrm{spar}(\hat\omega)|_r$ defined in the Definition \ref{denspar}, a desired weight matrix $\Omega$ can be obtained by $\Omega=\mathrm{vec}^{-1}(\omega)$. So far, Lemma \ref{lemwt} is proven. $\hfill\blacksquare$
\end{proof}
\begin{remark}
	For \eqref{sle}, it is a system of linear equations about $\hat\omega$, whose has at least one solution. And 
	\begin{enumerate}
		\item if any two components of vector $v_1^{\prime}$ are not the same, \eqref{sle} is determined,
		\item otherwise, it is underdetermined. 
	\end{enumerate}
	Considering that equation \eqref{sle} is underdetermined, a linear programming model can be constructed as follows to solve the optimal solution,
	\begin{equation}\label{opti}
		\min\,f^\mathrm{T}{\hat\omega},\,\mathrm{s.t.}\,{\hat M}{\hat \omega}=\rho({\cal A})v_1^{\prime},
	\end{equation}
	where there are multiple criteria to choose coefficient $f$. For example, $f_i=1$ for $\forall i$ means that the solution to the optimization problem in \eqref{opti} ensures that the communication topology has the minimum total communication weight.
\end{remark} 
\par Here, an optimization objective \eqref{opti1} is proposed to find the minimum norm solution to \eqref{sle}, i.e.,
\begin{equation}\label{opti1}
	\min\,\Vert{\hat\omega}\Vert,\,\mathrm{s.t.}\,{\hat M}{\hat \omega}=\rho({\cal A})v_1^{\prime}.
\end{equation}
which can be solved by
\begin{equation}
	\label{Wm}
	{\hat\omega}={\hat M}^+\rho({\cal A})v_1^{\prime},
\end{equation}
where ${\hat M}^+$ is the Moore–Penrose inverse of ${\hat M}$.
\par The following theorem and its proof are provided to prove that MCs can achieve the desired LFCC under \eqref{lam1} with \eqref{ci}.
\begin{theorem}
	\label{TH1}
	For $\mathrm{m}$ clusters MGs containing $\mathrm{n}_\mathrm{m}$ DGs, MCs can be achieved desired LFCC under \eqref{lam1} and \eqref{lam0} with \eqref{ci} and $\mathrm{k}_1>0$, where $\Omega$ is obtained by Lemma \ref{lemwt} and \eqref{Wm}.
\end{theorem}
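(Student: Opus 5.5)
The proof splits into two layers: the dynamics \eqref{lam1} with \eqref{ci} of the flexible agents, and the intra-cluster dynamics \eqref{lam0}, which are driven by these agents. For the flexible layer, I would substitute \eqref{ci} into \eqref{lam1} to obtain the closed loop
\begin{equation*}
\dot\lambda_{\cal F}={\cal K}_1\bigl(\mathrm{A}-\mathrm{B}\bigr)\lambda_{\cal F}+{\cal K}_1\mathrm{B}W\rho .
\end{equation*}
The first goal is to show that $\mathrm{A}-\mathrm{B}$ is Hurwitz. By Lemma \ref{lemwt} and \eqref{Wm}, the matrix ${\cal A}\circ\Omega$ is nonnegative. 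It is irreducible, because ${\cal A}$ is irreducible by Assumption \ref{assu0} and the solution $\hat\omega$ can be taken positive on every existing edge. Its leading eigenvector is $v_1^{\prime}>0$ with eigenvalue $\rho({\cal A})$, so Theorem \ref{THPF} gives that $\frac{1}{\rho({\cal A})}({\cal A}\circ\Omega)$ has spectral radius exactly one, as a simple eigenvalue. Consequently ${\cal M}\triangleq-\mathrm{A}=\mathrm{I}-\frac{1}{\rho({\cal A})}({\cal A}\circ\Omega)$ is a singular irreducible M-matrix with a simple zero eigenvalue. Its right eigenvector is $\nu=v_1^{\prime}>0$, and its left eigenvector $\omega_l>0$ again follows from Theorem \ref{THPF} applied to the transpose.

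The second step handles the pinning term. I would write $\mathrm{B}=\sum_{j:\mathrm{b}_j=1}e_je_j^\mathrm{T}$ and add the pinning rank-one terms one at a time, starting with a single pinned cluster $j_0$. For ${\cal M}+e_{j_0}e_{j_0}^\mathrm{T}$, Lemma \ref{nons} applies because $(\nu^\mathrm{T}e_{j_0})(e_{j_0}^\mathrm{T}\omega_l)=v^{\prime}_{1,j_0}\omega_{l,j_0}>0$, so this matrix is nonsingular. Lemma \ref{Mm}, applied with sign conventions $x=y=-e_{j_0}$, then gives the positivity of the inverse, which is how I read that lemma. Since ${\cal M}+e_{j_0}e_{j_0}^\mathrm{T}$ is a Z-matrix with a nonnegative inverse, it is a nonsingular M-matrix. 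Additional pinned clusters only increase the diagonal, and a Z-matrix that dominates a nonsingular M-matrix entrywise on the diagonal (with the same off-diagonal part) is again a nonsingular M-matrix. Therefore ${\cal M}+\mathrm{B}$ is a nonsingular M-matrix, its eigenvalues have positive real parts, and $\mathrm{A}-\mathrm{B}$ is Hurwitz; I would phrase this through Lemma \ref{MH}. Because ${\cal K}_1>0$, $\lambda_{\cal F}$ converges exponentially to the unique equilibrium
\begin{equation*}
\lambda_{\cal F}^*=({\cal M}+\mathrm{B})^{-1}\mathrm{B}W\rho .
\end{equation*}
Since $({\cal M}+\mathrm{B})^{-1}\ge0$ and $({\cal M}+\mathrm{B})^{-1}\mathrm{B}\mathbf{1}\le\mathbf 1$, with the row-sum property to be checked from the normalization, each component of $\lambda_{\cal F}^*$ is a convex combination of the leader values $W\rho$. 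This matches the definition of LFCC.

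The third step treats each cluster. By Corollary \ref{assu1}, the flexible agent $\mathrm{n}_{j-1}+1$ is the root of ${\cal G}_j$ and receives no in-cluster input, so its state is governed only by the flexible-layer dynamics. For the remaining agents, \eqref{lam0} reads $\dot{\tilde\lambda}_j=-{\cal K}_1\tilde{\cal L}_j\tilde\lambda_j$. Splitting off the root, the follower block $\tilde{\cal L}_{j,ff}$ is a nonsingular M-matrix because the root reaches every node, and the follower error $e=\tilde\lambda_{j,f}-\mathbf 1\lambda_{\mathrm{n}_{j-1}+1}$ satisfies $\dot e=-{\cal K}_1\tilde{\cal L}_{j,ff}e-\mathbf 1\dot\lambda_{\mathrm{n}_{j-1}+1}$. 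The forcing term decays exponentially by the second step, so an input-to-state argument gives $e\to0$. Hence every $\lambda_i$ with $i\in{\cal C}_j$ tends to $\lambda^*_{{\cal F},j}$, which establishes LFCC.

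I expect the main obstacle to be the second step: justifying irreducibility and positivity of the weights produced by \eqref{Wm}, and correctly using Lemmas \ref{nons} and \ref{Mm} to handle several pinned nodes at once. If the lemma-based route proves awkward, I would fall back on the standard fact that an irreducible singular M-matrix plus a nonzero nonnegative diagonal is a nonsingular M-matrix, which follows from irreducible diagonal dominance after the similarity $\mathrm{diag}(v_1^{\prime})^{-1}(\cdot)\,\mathrm{diag}(v_1^{\prime})$.
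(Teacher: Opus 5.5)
\textbf{There is a genuine gap: the proposal never shows the flexible agents converge to the prescribed prices $W\rho$.} Your Hurwitz argument for $\mathrm{A}-\mathrm{B}$ is essentially the paper's route via Lemmas \ref{nons}, \ref{Mm} and \ref{MH}, with a sounder fallback. Your intra-cluster step is also fine. The theorem, read against (\textit{P1.1}), requires $\lambda_i\to W_j\rho$ for every $i\in{\cal C}_j$. Only the clusters with $\mathrm{b}_j=1$ ever receive their own target, so the other targets must be encoded in $\Omega$. The missing idea is that the prescribed eigenvector in Lemma \ref{lemwt} is $v_1^{\prime}=W\rho$; this is how \eqref{Wt1} is computed from $W\rho$. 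Then $({\cal A}\circ\Omega)W\rho=\rho({\cal A})W\rho$, i.e.\ $\mathrm{A}W\rho=0$. Hence $(\mathrm{A}-\mathrm{B})W\rho+\mathrm{B}W\rho=0$, and the unique equilibrium is exactly $\lambda_{\cal F}^*=W\rho$. Equivalently, $e^\lambda=\lambda_{\cal F}-W\rho$ satisfies $\dot e^\lambda={\cal K}_1(\mathrm{A}-\mathrm{B})e^\lambda$, which is how the paper argues. For a generic $v_1^{\prime}$, your argument only gives convergence to \emph{some} fixed point $({\cal M}+\mathrm{B})^{-1}\mathrm{B}W\rho$, not to the prescribed prices.

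\textbf{The convex-combination claim you use instead is false.} The matrix $\frac{1}{\rho({\cal A})}({\cal A}\circ\Omega)$ is not row-stochastic: its Perron vector is $v_1^{\prime}$, not $\mathbf 1$. So in general ${\cal M}\mathbf 1\neq 0$, and $({\cal M}+\mathrm{B})^{-1}\mathrm{B}\mathbf 1\le\mathbf 1$ fails; no normalization rescues it. In the paper's own example only cluster 1 is pinned, with value $0.195$. Yet $({\cal M}+\mathrm{B})^{-1}e_1=W\rho/0.195=[1,1,3,3,4,4,2,2]^\mathrm{T}$, and the equilibrium has components equal to $0.78$. This amplification through non-stochastic weights is precisely what the design exploits. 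The flexible-layer equilibrium is therefore not a convex combination of pinned leader values. The ``convex combination'' in the LFCC definition means instead that $W_j\rho$ is a convex combination of the UG prices $\rho_k$ (the rows of $W$ sum to one). That holds automatically once you show $\lambda_{\cal F}\to W\rho$. Combining that with your third step completes the proof.
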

\begin{proof}
	The following is obtained by substituting \eqref{ci} into \eqref{lam1},
	\begin{equation}
		\dot \lambda_1={\cal K}_1(\mathrm{A}-\mathrm{B})\lambda_1+{\cal K}_1\mathrm{B}W\rho.
	\end{equation}
	We define control error as $e^\lambda=\lambda_1-W\rho$. Thus, according to $\dot\lambda$ and by taking the derivative of $e^\lambda$ with respect to $t$, one can get
	\begin{equation}
		\label{lame}
		\dot e^\lambda={\cal K}_1(\mathrm{A}-\mathrm{B})e^\lambda.
	\end{equation}
	Obviously, $\lim\limits_{t\to+\infty}\|e^\lambda\|=0$ if $(\mathrm{A}-\mathrm{B})$ is Hurwitz and ${\cal K}_1>0$, which will be confirmed as follows.
	\par Define two matrices as follows,
	\begin{equation}
		\label{At}
		\mathrm{\tilde A}=\rho({\cal A})\mathrm{I}_\mathrm{m}-({\cal A}\circ \Omega)=-\rho({\cal A})\mathrm{A},
	\end{equation}
	\begin{equation}
		\mathrm{F}=\mathrm{\tilde A}+\rho({\cal A})\mathrm{bb^\mathrm{T}}=-\rho({\cal A})(\mathrm{A}-\mathrm{B}).
	\end{equation}
	As the matrix $({\cal A}\circ \Omega)$ is non-negative and irreducible, according to \eqref{At}, there exist that the right and left eigenvectors associated with zero eigenvalue are positive, which and $\mathrm{b}$ satisfy the result in Lemma \ref{nons}, i.e., $\mathrm{F}$ is non-singular. Then, according to Lemma \ref{Mm}, $\mathrm{F}$ has all positive leading minors, i.e., is positive. Thus, $({\cal A}-\mathrm{B})$ is Hurwitz according to Lemma \ref{MH}, which leads to $\lim\limits_{t\to+\infty}\|e^\lambda\|=0$. 
	\par Next, based on Corollary \ref{assu1} and the above result, MCs of DGs can be achieved LFCC driven by \eqref{lam0}.
	\par So far, Theorem \ref{TH1} is proven. $\hfill\blacksquare$
\end{proof}
\par Here, the differences from previous literature should be emphasized in the following two remarks.
\begin{remark}
	In previous works \cite{7556330,CHANGBIN20141965,7835158,4663649,YU20142341}, the Laplacian matrix corresponding to the communication topology between agents follows the following form
	\begin{equation}
		{\cal L}=\begin{bmatrix}
			c_1{\cal L}_{11}& {\cal L}_{12} & \cdots &{\cal L}_{1n}\\
			{\cal L}_{21}& c_2{\cal L}_{22} & \cdots & {\cal L}_{2n}\\
			\cdots & \cdots & \ddots & \cdots\\
			{\cal L}_{n1}& {\cal L}_{n2} & \cdots & c_n{\cal L}_{nn}\\
		\end{bmatrix}.
	\end{equation}
	Furthermore, in \cite{QIN20132898,XIA20112395,9388882}, the communication topology is required to follow a block Laplacian matrix as
	\begin{equation}
		{\cal L}=\begin{bmatrix}
			{\cal L}_{11}& {\cal L}_{12} & \cdots &{\cal L}_{1n}\\
			{\cal L}_{21}& {\cal L}_{22} & \cdots & {\cal L}_{2n}\\
			\cdots & \cdots & \ddots & \cdots\\
			{\cal L}_{n1}& {\cal L}_{n2} & \cdots & {\cal L}_{nn}\\
		\end{bmatrix}.
	\end{equation}
	The above two Laplacian matrices are clustered into blocks. From previous results, it can be seen that when the communication coupling strength between clusters needs to be 0, i.e., ${\cal L}_{ij}1_{m_j}=0$, $i\neq j$, and/or the communication coupling strength $c_i$ within cluster $i$ is sufficiently large, cluster consensus can naturally be achieved. In this article, this strict condition is broken. The communication coupling strength between clusters is not required to be 0. In fact, the property of non-zero inter cluster communication coupling is utilized to achieve the control objective in this paper.
\end{remark}
\begin{remark}
	Although some work has utilized pinning control techniques to achieve cluster consensus, our work makes significant changes. (a) Only a few agents in a few clusters can access the leader, and the communication topology between clusters is connected. The designed scheme follows the form of a leader-following consensus controller of MASs, where only a few agents can access the leader information, and not all clusters can access the leader information. A special case is that only one agent in a certain cluster can access the leader information. (b) The communication coupling between clusters is used to achieve cluster consensus, and the total coupling strength here does not necessarily need to be 0, which is the biggest difference from previous cyclic/acyclic communication topologies \cite{QIN20132898,XIA20112395}.
\end{remark}
\par In this section, the power allocation problem of DGs has been solved. Subsequently, by planning the communication weights to design distributed controllers, the power deficit of each subgrid will be supplemented by UGs according to a predetermined ratio.
\begin{figure*}
	\centering
	\includegraphics[width=16cm]{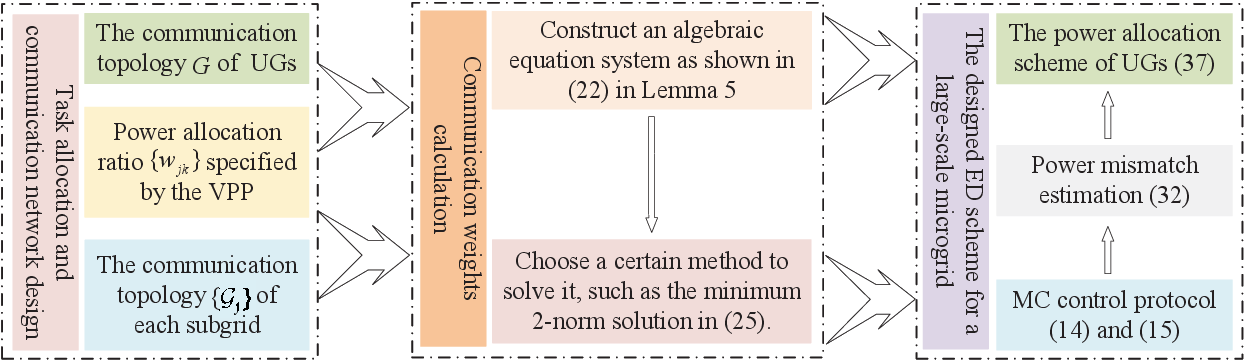}
	\caption{A schematic diagram of the designed scheme.}
	\label{fc}
\end{figure*}
\subsection{Estimation of Power Mismatch of Each Subgrid}
\par To estimated the total power deficit, a distributed estimation scheme is developed as follows
\begin{equation}
	\label{pm1}
	\dot\chi_i={\cal K}_2\sum_{i,l\in{\cal C}_j}\mathrm{{\tilde a}}_{il}(\chi_l-\chi_i)-\frac{1}{2\alpha_i}(1-\mathrm{\overline{b}}_{ii})\dot\lambda_i,
\end{equation}
where $\chi_i$ and $\chi_l$ are the estimated average power mismatches of subgrid $j$, $\mathrm{\overline{b}}_{ii}=1$ if agent $i$ is flexible; otherwise, $\mathrm{\overline{b}}_{ii}=0$, and ${\cal K}_2>0$ is a control gain, whose compact form is
\begin{equation}
	\label{pm2}
	\dot\chi=-{\cal K}_2{\tilde {\cal L}}\chi-\frac{1}{2}\alpha^{-1}\circ((\mathrm{I_n}-\overline{{\cal B}})\dot\lambda),
\end{equation}
where $\chi=\mathrm{col}\{\chi_1,\cdots,\chi_n\}$, $\overline{{\cal B}}=\mathrm{diag}\{\mathrm{\overline{b}}_{11},\cdots,\mathrm{\overline{b}}_{\mathrm{nn}}\}$.
With these prerequisites, the vector composed of the total mismatch of each subgrid conforms to the following dynamics
\begin{equation}
	\label{ER1}
	\dot P_\mathrm{M}=-{\cal K}_2{\cal Q}^\mathrm{T}{\tilde{\cal B}}\chi-{\cal Q}^\mathrm{T}\frac{1}{2\alpha}\circ (\overline{{\cal B}}\dot\lambda),
\end{equation}
where ${\cal Q}$ is a characteristic matrix following Definition \ref{defq}, and $P_\mathrm{M}$ is an $\mathrm{m}$-dimensional vector. It is necessary for the initial condition to satisfy ${\cal Q}^\mathrm{T}(\chi(0)-\Delta P(0))+P_\mathrm{M}(0)=0_\mathrm{m}$. 
However, in practical applications, it is necessary to ensure supply-demand balance before ED. Since the power deficit of each subgrid is borne by UGs, the initial conditions to choose $P_M(0)-{\cal Q}^\mathrm{T}\Delta P(0)=0_\mathrm{m}$ and $\chi(0)=0_\mathrm{n}$ appear more practical.
\par Thus, there is a theorem as follows on $\chi(0)$ and $P_\mathrm{M}(0)$ under \eqref{pm2} and \eqref{ER1}.
\begin{theorem}
	\label{TH21}
	Driven by \eqref{pm2} and \eqref{ER1}, under the initial conditions $P_\mathrm{M}(0)-{\cal Q}^\mathrm{T}\Delta P(0)=0_\mathrm{m}$ and $\chi(0)=0_\mathrm{n}$, the estimated average power mismatch of each subgrid asymptotically converges to 0 if $\mathrm{k}_2>0$. In addition, the total power mismatch within each subgrid can be estimated.
\end{theorem}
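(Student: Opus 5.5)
The plan is to split \eqref{pm2} cluster by cluster and read it as a disturbed consensus (dynamic average tracking) scheme, in the standard style of distributed power-mismatch estimators. For cluster $j$, stack $\chi^j=\mathrm{col}(\chi_i)_{i\in{\cal C}_j}$. Then
\begin{equation*}
	\dot\chi^j=-{\cal K}_2{\tilde{\cal L}}_j\chi^j-\tfrac12\big(\alpha^{-1}\circ((\mathrm{I}-\overline{{\cal B}})\dot\lambda)\big)^j .
\end{equation*}
The forcing term equals $\dot P_i=\dot\lambda_i/(2\alpha_i)$ for non-flexible agents, since $P_i=(\lambda_i-\beta_i)/(2\alpha_i)$. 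I will therefore interpret $\chi_i$ as tracking the change of local mismatch. The fact that the flexible agent sits at the root of ${\cal G}_j$ (Corollary \ref{assu1}) is what makes the scheme well posed: the flexible agent receives no intra-cluster information, so ${\tilde{\cal L}}_j$ has a zero row there.

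\textbf{Step 1: an invariant.} Take the left null vector $\xi_j\ge0$ of ${\tilde{\cal L}}_j$ (for a rooted graph with the root at the flexible agent, $\xi_j$ is concentrated on the root), or equivalently sum via ${\cal Q}^\mathrm{T}$ with the weighting used in \eqref{ER1}. Combining \eqref{pm2} with \eqref{ER1} should give
\begin{equation*}
	\tfrac{d}{dt}\big({\cal Q}^\mathrm{T}(\chi-\Delta P)+P_\mathrm{M}\big)=0 .
\end{equation*}
With the chosen initial data $\chi(0)=0_\mathrm{n}$ and $P_\mathrm{M}(0)={\cal Q}^\mathrm{T}\Delta P(0)$, this yields ${\cal Q}^\mathrm{T}(\chi(t)-\Delta P(t))+P_\mathrm{M}(t)=0$ for all $t$. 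This identity is the statement that the total mismatch of each subgrid is encoded in $P_\mathrm{M}$ together with the aggregated estimates.

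\textbf{Step 2: the forcing vanishes.} By Theorem \ref{TH1}, $e^\lambda\to0$ exponentially because $\mathrm{A}-\mathrm{B}$ is Hurwitz. Within each cluster, \eqref{lam0} is a rooted-tree consensus driven by the flexible agent, whose Laplacian reduced by the root row is Hurwitz. Hence $\lambda\to\mathrm{col}$ of the cluster targets exponentially, so $\dot\lambda\in{\cal L}_1$ and $\dot\lambda\to0$ exponentially.

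\textbf{Step 3: $\chi$ decays.} Since $\chi$ is driven by an exponentially vanishing input through $-{\cal K}_2{\tilde{\cal L}}_j$ with ${\cal K}_2>0$, split $\chi^j$ into its component along the null direction $1$ and the disagreement component. The disagreement part obeys a Hurwitz system with a vanishing input, so it tends to $0$ by an ISS or variation-of-constants argument. The root agent's $\chi$ is frozen at its initial value $0$ because its forcing is removed by $\overline{\cal B}$. Consequently every agent's $\chi_i$ converges to the root value, namely $0$. Plugging $\chi\to0$ into the Step 1 invariant gives $P_\mathrm{M}\to{\cal Q}^\mathrm{T}\Delta P(\infty)$, which is the estimate of each subgrid's total mismatch and closes (P1.2).

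\textbf{Main obstacle.} The hardest step is Step 1: verifying exactly that \eqref{ER1} and \eqref{pm2} combine into a conserved quantity. The term ${\cal Q}^\mathrm{T}{\tilde{\cal B}}\chi$ has to cancel the Laplacian contribution, and for a directed rooted graph $1^\mathrm{T}{\tilde{\cal L}}_j\neq0$. I would first check whether ${\tilde{\cal B}}$ plays the role of the column sums of ${\tilde{\cal L}}_j$, i.e. $1^\mathrm{T}{\tilde{\cal L}}_j$. If it does, the invariant holds by direct differentiation. If it does not, I would instead use the left eigenvector weighting $\xi_j$ and restate the estimate accordingly.
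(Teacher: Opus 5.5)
Your proposal is correct and follows the paper's proof essentially step for step. The paper also shows $\tfrac{d}{dt}\big({\cal Q}^\mathrm{T}(\chi-\Delta P)+P_\mathrm{M}\big)=-{\cal K}_2{\cal Q}^\mathrm{T}(\tilde{\cal L}+\tilde{\cal B})\chi=0_\mathrm{m}$, uses the root position of the flexible agent together with $\dot\lambda\to0$ from Theorem \ref{TH1} to get $\chi\to0$, and then reads off $P_\mathrm{M}-{\cal Q}^\mathrm{T}\Delta P\to0$. The cancellation you flag as the main obstacle is exactly what the paper asserts, and it holds for the case-study matrices, where $\tilde{\cal L}+\tilde{\cal B}$ is a complete-graph Laplacian with zero column sums; so your primary route goes through, the left-eigenvector fallback is not needed, and your grounded-Laplacian/vanishing-input argument for $\chi\to0$ fills in detail the paper leaves implicit.
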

\begin{proof}
	According to Corollary \ref{assu1} and Definition \ref{de1}, in any cluster $j$, the flexible agent is located at the root node of the graph ${\cal G}_j$. Therefore, in view of $\lim\limits_{t\to+\infty}\|\dot\lambda\|=0$ according to Theorem \ref{TH1}, it can be asserted that
	\begin{equation}
		\lim\limits_{t\to+\infty}\|\chi\|=0.
	\end{equation}
	\par Based on \eqref{pm2} and \eqref{ER1}, there must be
	\begin{equation}
		\begin{aligned}
			{\cal Q}^\mathrm{T}(\dot\chi-\Delta \dot P)+\dot P_\mathrm{M}=&{\cal Q}^\mathrm{T}(\dot\chi+\frac{1}{2\alpha}\circ\dot\lambda)+\dot P_\mathrm{M}\\
			=&-{\cal K}_2{\cal Q}^\mathrm{T}\mathrm{L}\chi=0_\mathrm{m},
		\end{aligned}
	\end{equation}
	where $\mathrm{L}={\tilde {\cal L}}+{\tilde {\cal B}}$. With the conditions that $P_\mathrm{M}(0)-{\cal Q}^\mathrm{T}\Delta P(0)=0_\mathrm{m}$, $\chi(0)=0_\mathrm{n}$ and $\lim\limits_{t\to+\infty}\|\chi\|=0$, one can get $\lim\limits_{t\to+\infty}(P_\mathrm{M}-{\cal Q}^\mathrm{T}\Delta P)=0_\mathrm{m}$, which implies that $\lim\limits_{t\to+\infty} P_\mathrm{M}$ is the power deficit vector. 
	\par Besides, the result of Theorem \ref{TH1} shows that $\Delta P_i$ for $i\in\{1,2,\cdots,\mathrm{n}\}$ is asymptotically stable. Thus, $P_{\mathrm{M},j}$ for $j\in\{1,2,\cdots,\mathrm{m}\}$ is asymptotically stable. $\hfill\blacksquare$
\end{proof}
\par Thus, based on the above results, a proposition is proposed here to summarize the current conclusion.
\begin{proposition}
	According to the results of Theorems \ref{TH1} and \ref{TH21}, Problem \ref{p1} can be solved.
\end{proposition}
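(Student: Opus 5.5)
The plan is to check the two parts of Problem \ref{p1} one at a time. Each part is matched to one of the two theorems just proved, and the only new work is making sure the limits supplied by those theorems are the ones the problem asks for.

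\emph{Sub-problem (P1.1).} I will invoke Theorem \ref{TH1}. It gives $e^\lambda=\lambda_{\cal F}-W\rho\to 0$, so each flexible agent $\mathrm{n}_{j-1}+1$ satisfies
\[
\lambda_{\mathrm{n}_{j-1}+1}\to W_j\rho=\sum_{k=1}^{\mu}w_{jk}\rho_k .
\]
For the non-flexible agents of cluster ${\cal C}_j$, Corollary \ref{assu1} says the flexible agent is the root of ${\cal G}_j$. So \eqref{lam0} is a leader-follower consensus inside ${\cal C}_j$, driven by a root whose state converges. From this I will conclude that every $\lambda_i$, $i\in{\cal C}_j$, tends to the same limit.

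To make this step rigorous, write ${\tilde\lambda}_j$ as the root component plus a follower block $\lambda_f$. The follower block obeys
\[
\dot\lambda_f=-{\cal K}_1{\tilde{\cal L}}_{ff}\lambda_f+{\cal K}_1\mathrm{a}_{f}\lambda_{\mathrm{root}} .
\]
Because the root reaches every follower, ${\tilde{\cal L}}_{ff}$ is a nonsingular M-matrix. By Lemma \ref{MH}, $-{\cal K}_1{\tilde{\cal L}}_{ff}$ is then Hurwitz. A stable linear system driven by a convergent input converges, and its limit is ${\tilde{\cal L}}_{ff}^{-1}\mathrm{a}_f\,W_j\rho=\mathbf 1\,W_j\rho$, using the row-sum property ${\tilde{\cal L}}_{ff}\mathbf 1=\mathrm{a}_f$. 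This gives (P1.1) and also $\dot\lambda\to 0$.

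\emph{Sub-problem (P1.2).} I will invoke Theorem \ref{TH21}. Under the stated initial conditions, it gives $\chi\to 0$ and
\[
P_\mathrm{M}-{\cal Q}^\mathrm{T}\Delta P\to 0_\mathrm{m}, \qquad \Delta P_i=\tfrac{1}{\mathrm{n}_j-\mathrm{n}_{j-1}}\Bigl(\mathrm{D}_j-\textstyle\sum_{i\in{\cal C}_j}P_i\Bigr)\ \text{(per-agent share)},
\]
so that $({\cal Q}^\mathrm{T}\Delta P)_j=\mathrm{D}_j-\sum_{i\in{\cal C}_j}P_i$.

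Next, $P_i=(\lambda_i-\beta_i)/(2\alpha_i)$ is affine in $\lambda_i$. Since $\lambda_i$ converges by the previous step, $\sum_{i\in{\cal C}_j}P_i$ converges as well. The limit of $P_{\mathrm{M},j}$ is therefore exactly $\mathrm{D}_j-\lim_{t\to\infty}\sum_i P_i$, which is (P1.2). Combining the two parts, the KKT conditions \eqref{MCc} and the balance constraint \eqref{balancej} hold in the limit with $\gamma_j=W_j\rho$. The problem is convex, so these conditions are also sufficient for optimality.

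\emph{Main obstacle.} The delicate point is the intra-cluster step. Theorem \ref{TH1} only states its conclusion loosely for \eqref{lam0}. Moreover, \eqref{lam0} as written, $-{\cal K}_1{\tilde{\cal L}}_j{\tilde\lambda}_j$, must be read so that the flexible (root) agent evolves by \eqref{lam1} rather than by its own Laplacian row. I would make this root/follower split explicit and then apply the M-matrix argument sketched above. A secondary point is the identification of ${\cal Q}^\mathrm{T}\Delta P$ with the subgrid mismatch. I would state this as the convention implicit in \eqref{pm1}, where each agent's term $-\dot\lambda_i/(2\alpha_i)=-\dot P_i$ tracks the change in its own share of the mismatch.
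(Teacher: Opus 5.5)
Your proposal is correct and follows the paper's route. The proposition is just the conjunction of Theorem~\ref{TH1} and Theorem~\ref{TH21}: the first gives (\textit{P1.1}) once intra-cluster convergence under \eqref{lam0} is shown, which the paper only asserts from Corollary~\ref{assu1} and you make explicit via the nonsingular M-matrix ${\tilde{\cal L}}_{ff}$ and ${\tilde{\cal L}}_{ff}\mathbf{1}=\mathrm{a}_f$; the second gives (\textit{P1.2}) from $P_\mathrm{M}-{\cal Q}^\mathrm{T}\Delta P\to 0_\mathrm{m}$ together with convergence of $\sum_{i\in{\cal C}_j}P_i$. One cosmetic point is that your averaged definition of $\Delta P_i$ does not literally satisfy $\Delta\dot P_i=-\dot P_i$ as used in the proof of Theorem~\ref{TH21}, but this is harmless because only the cluster sums ${\cal Q}^\mathrm{T}\Delta P$ enter, and they agree under either convention.
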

\subsection{Design of A Power Allocation Scheme Between UGs}
\par Next, a power allocation scheme is designed as follows for UGs to compensate for the power deficit of each subgrids,
\begin{equation}
	\label{agg}
	\dot P_{\mathrm{UG},k_1}={\cal K}_2(\frac{1}{\rho( \mathrm{G})}\sum_{k_2=1}^{\mu}\mathrm{g}_{k_1k_2}r_{k_1k_2} P_{\mathrm{UG},k_2}-P_{\mathrm{UG},k_1})+u_{k_1}^{\mathrm{UG}},
\end{equation}
where $\mathrm{G}=[\mathrm{g}_{k_1k_2}]$ and $R=[r_{k_1k_2}]$ are an adjacency and a weight matrices corresponding to the communication topology between UG agents, $\rho(\mathrm{G})$ is the leading eigenvalue of $\mathrm{G}$, $k_1$, $k_2\in\{1,2,\cdots,\mu\}$, $u_{k_1}^{\mathrm{UG}}$ is the pinning control input, and
\begin{equation}
	\label{ci2}
	u_{k_1}^{\mathrm{UG}}={\cal K}_2\mathrm{h}_{{k_1k_1}}(W_{k_1}^\mathrm{T}P_\mathrm{M}-P_{\mathrm{UG},{k_1}}),
\end{equation}
${\cal H}=\mathrm{diag}\{\mathrm{h}_{11},\cdots,\mathrm{h}_{\mu\mu}\}$ is an adjacency matrix, $W_{k_1}^\mathrm{T}$ is the ${k_1}$-th row of $W^\mathrm{T}$. $W_{k_1}^\mathrm{T}P_\mathrm{M}$ is the reference signal of UG agent ${k_1}$ provided by the VPP agent. Therefore, there is the following theorem regarding the allocation of output power for each UG.
\begin{theorem}
	\label{TH2}
	There can be definitely found a weight matrix $R$ to ensure that, driven by \eqref{agg} and \eqref{ci2}, the output power $P_{\mathrm{UG},k_1})$ of UG $k_1$ for $\forall k_1\in\{1,2,\cdots,\mu\}$ can be reached LFCC by \eqref{agg} amd \eqref{ci2}.
\end{theorem}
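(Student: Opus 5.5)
The argument follows Theorem \ref{TH1} step for step: first construct $R$, then show that the tracking error is governed by a Hurwitz matrix.

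\textbf{Step 1: constructing $R$.} By Assumption \ref{assu0}, the communication graph among UG agents is directed and connected. Hence $\mathrm{G}$ is non-negative and irreducible, and Theorem \ref{THPF} gives a simple leading eigenvalue $\rho(\mathrm{G})$ with a positive eigenvector. Fix any desired vector $v^{\prime}>0$ (for example the all-ones vector, or one prescribed by the VPP). Apply Lemma \ref{lemwt} verbatim with ${\cal A}$ replaced by $\mathrm{G}$ and $\Omega$ by $R$:
\begin{itemize}
\item vectorize the condition $(\mathrm{G}\circ R)v^{\prime}=\rho(\mathrm{G})v^{\prime}$;
\item densify to obtain ${\hat M}_{\mathrm{G}}{\hat r}=\rho(\mathrm{G})v^{\prime}$;
\item take ${\hat r}={\hat M}_{\mathrm{G}}^{+}\rho(\mathrm{G})v^{\prime}$ as in \eqref{Wm}.
\end{itemize}
Each column of ${\hat M}_{\mathrm{G}}$ has exactly one positive entry, and each row has at least one, because every UG agent has an in-neighbour. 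So the system is consistent, and the minimum-norm solution is non-negative; it is in fact positive on every existing edge. Therefore $\mathrm{G}\circ R$ is non-negative, has the same zero pattern as $\mathrm{G}$ (so it stays irreducible), and has spectral radius $\rho(\mathrm{G})$ with positive leading eigenvector $v^{\prime}$. This establishes the existence part of the statement.

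\textbf{Step 2: error dynamics.} Stack \eqref{agg} with \eqref{ci2} as
$$\dot P_{\mathrm{UG}}={\cal K}_2(\mathrm{A}_{\mathrm{G}}-{\cal H})P_{\mathrm{UG}}+{\cal K}_2{\cal H}W^{\mathrm{T}}P_{\mathrm{M}},\qquad \mathrm{A}_{\mathrm{G}}=\tfrac{1}{\rho(\mathrm{G})}(\mathrm{G}\circ R)-\mathrm{I}_{\mu}.$$
Define the error $e^{\mathrm{UG}}=P_{\mathrm{UG}}-W^{\mathrm{T}}P_{\mathrm{M}}$. Then
$$\dot e^{\mathrm{UG}}={\cal K}_2(\mathrm{A}_{\mathrm{G}}-{\cal H})e^{\mathrm{UG}}+{\cal K}_2\mathrm{A}_{\mathrm{G}}W^{\mathrm{T}}P_{\mathrm{M}}-W^{\mathrm{T}}\dot P_{\mathrm{M}}.$$
By Theorem \ref{TH21}, $\dot P_{\mathrm{M}}\to 0$ and $P_{\mathrm{M}}$ converges, so the last term is a vanishing input. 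The middle term is a genuine obstruction unless $W^{\mathrm{T}}P_{\mathrm{M}}$ lies in the kernel of $\mathrm{A}_{\mathrm{G}}$, i.e. is parallel to $v^{\prime}$.

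I would handle this in one of two ways:
\begin{itemize}
\item \emph{Time-varying design.} Choose $v^{\prime}$ equal to the steady-state reference $W^{\mathrm{T}}P_{\mathrm{M}}(\infty)$, which is positive when the deficits are positive. This is exactly the role $W\rho$ plays in Theorem \ref{TH1}, where $\lambda$ is steered to a prescribed vector. Then $\mathrm{A}_{\mathrm{G}}W^{\mathrm{T}}P_{\mathrm{M}}\to 0$ and the middle term is also vanishing.
\item \emph{Weaker claim.} Read the LFCC target as the equilibrium $-(\mathrm{A}_{\mathrm{G}}-{\cal H})^{-1}{\cal H}W^{\mathrm{T}}P_{\mathrm{M}}$, i.e. a convex-type combination of the leader signals.
\end{itemize}
I would commit to the first and state it explicitly.

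\textbf{Step 3: Hurwitz property (main obstacle).} This step needs a careful M-matrix argument rather than the loose citation of Lemmas \ref{nons}--\ref{MH}. Set $\mathrm{F}=\rho(\mathrm{G})\mathrm{I}_{\mu}-\mathrm{G}\circ R+\rho(\mathrm{G}){\cal H}$. Here $\rho(\mathrm{G})\mathrm{I}_{\mu}-\mathrm{G}\circ R$ is a singular irreducible M-matrix with a simple zero eigenvalue and positive left and right null vectors. Provided ${\cal H}\ge 0$ and ${\cal H}\neq 0$ (at least one UG agent receives its VPP reference), adding ${\cal H}$ yields a nonsingular irreducible M-matrix. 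This follows from Lemma \ref{nons} applied to each rank-one piece, or directly from the standard fact that an irreducible singular M-matrix plus a non-zero non-negative diagonal matrix is nonsingular. A nonsingular M-matrix has all eigenvalues with positive real part, so $\mathrm{A}_{\mathrm{G}}-{\cal H}=-\mathrm{F}/\rho(\mathrm{G})$ is Hurwitz; Lemma \ref{MH} supplies the same conclusion.

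\textbf{Conclusion.} With ${\cal K}_2>0$, an input-to-state stability argument for a Hurwitz linear system with a vanishing input gives $e^{\mathrm{UG}}\to 0$. Hence $P_{\mathrm{UG},k}\to\sum_{j}w_{jk}P_{\mathrm{M},j}$, which is LFCC and also solves Problem \ref{p2}.
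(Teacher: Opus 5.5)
Your proposal is correct and follows the paper's route: build $R$ via Lemma~\ref{lemwt} applied to $\mathrm{G}$, then repeat the error-dynamics and M-matrix/Hurwitz argument of Theorem~\ref{TH1}. You also make explicit what the paper's one-line ``similar to Theorem~\ref{TH1}'' leaves implicit: the prescribed eigenvector must be the steady-state reference $W^\mathrm{T}P_\mathrm{M}(\infty)>0$, so that ${\cal K}_2\mathrm{A}_\mathrm{G}W^\mathrm{T}P_\mathrm{M}\to 0$, and $W^\mathrm{T}\dot P_\mathrm{M}\to 0$ is then handled as a vanishing input. One small fix: rely only on the direct fact that an irreducible singular M-matrix plus a nonzero nonnegative diagonal is a nonsingular M-matrix, and drop the appeal to Lemma~\ref{nons} ``for each rank-one piece'', since that lemma needs a simple zero eigenvalue and so cannot be iterated once the first piece is added.
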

\begin{proof}
	Similar to Theorem \ref{TH1}, by Lemma \ref{lemwt}, a weight matrix $R$ can be found such that Theorem \ref{TH2} holds. $\hfill\blacksquare$
\end{proof}
\par Thus, based on the above results, a proposition is proposed here to summarize the current conclusion.
\begin{proposition}
	According to the results of Theorem \ref{TH2}, Problem \ref{p2} can be solved.
\end{proposition}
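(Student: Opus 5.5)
The plan is to reduce Theorem \ref{TH2} to the same structure as Theorem \ref{TH1}, with the UG layer playing the role of the flexible agents. There are three steps: build $R$ via Lemma \ref{lemwt}, show the closed-loop matrix is Hurwitz, and then identify the equilibrium.

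\textbf{Step 1: compact form and choice of $R$.} Write the UG dynamics in compact form. Set $P_{\mathrm{UG}}=\mathrm{col}(P_{\mathrm{UG},k})_{k=1}^{\mu}$ and $\mathrm{A}_G=\frac{1}{\rho(\mathrm{G})}(\mathrm{G}\circ R)-\mathrm{I}_\mu$. Then \eqref{agg} with \eqref{ci2} becomes
\begin{equation*}
\dot P_{\mathrm{UG}}={\cal K}_2(\mathrm{A}_G-{\cal H})P_{\mathrm{UG}}+{\cal K}_2{\cal H}W^\mathrm{T}P_\mathrm{M}.
\end{equation*}
Assume, as in Assumption \ref{assu0}, that the UG communication graph is directed and connected. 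Then $\mathrm{G}$ is nonnegative and irreducible. Pick any prescribed positive vector $v^{\prime}$ and apply Lemma \ref{lemwt} with ${\cal A}$ replaced by $\mathrm{G}$; for definiteness, take the minimum-norm solution \eqref{Wm}. This gives $R$ with $(\mathrm{G}\circ R)v^{\prime}=\rho(\mathrm{G})v^{\prime}$. Because $R$ is supported on the edges of $\mathrm{G}$ with positive entries, $\mathrm{G}\circ R$ stays nonnegative and irreducible. By Theorem \ref{THPF}, $\rho(\mathrm{G})$ is its simple Perron root, with positive right eigenvector $v^{\prime}$ and a positive left eigenvector.

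\textbf{Step 2: stability of the tracking error.} Define the error $e^{\mathrm{UG}}=P_{\mathrm{UG}}-W^\mathrm{T}P_\mathrm{M}$. By Theorem \ref{TH21}, $P_\mathrm{M}$ converges to the constant deficit vector, so $\dot P_\mathrm{M}\to0$ and acts as a vanishing input. The error then satisfies $\dot e^{\mathrm{UG}}={\cal K}_2(\mathrm{A}_G-{\cal H})e^{\mathrm{UG}}+\delta(t)$, where $\delta(t)\to0$. This input collects the drift term ${\cal K}_2\mathrm{A}_GW^\mathrm{T}P_\mathrm{M}$ and $-W^\mathrm{T}\dot P_\mathrm{M}$. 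To show $\mathrm{A}_G-{\cal H}$ is Hurwitz, mirror the proof of Theorem \ref{TH1}. The matrix $\rho(\mathrm{G})\mathrm{I}_\mu-\mathrm{G}\circ R$ is a singular irreducible M-matrix with simple zero eigenvalue and positive left and right null vectors. Adding the nonzero diagonal pinning term $\rho(\mathrm{G}){\cal H}$ gives, by Lemma \ref{nons}, a nonsingular matrix (for a single pinned UG this is exactly the rank-one case $hh^\mathrm{T}$). Lemma \ref{Mm} then makes it a nonsingular M-matrix, and Lemma \ref{MH} yields that $\mathrm{A}_G-{\cal H}$ is Hurwitz. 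Input-to-state stability of a Hurwitz linear system with a convergent input gives convergence of $P_{\mathrm{UG}}$.

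\textbf{Step 3: the limit is the desired LFCC state.} This is the main obstacle. The drift term does not vanish unless $\mathrm{A}_GW^\mathrm{T}P_\mathrm{M}^\infty=0$. I would handle it the way the weight design is intended to work: choose $v^{\prime}$ proportional to the steady reference $W^\mathrm{T}P_\mathrm{M}^\infty$, with positive entries assumed, as in the eigenvector-centrality construction. Then $\mathrm{A}_G v^{\prime}=0$, so the equilibrium of the closed loop is exactly $W^\mathrm{T}P_\mathrm{M}^\infty$ even with only some UGs pinned. Consequently $P_{\mathrm{UG},k}\to\sum_j w_{jk}P_{\mathrm{M},j}$, which is a convex-combination leader value for each UG, that is, LFCC. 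If the reference is not positive, the same argument would need a sign shift or an extension of Lemma \ref{lemwt} to signed $v^{\prime}$, and I would flag this as the delicate point.
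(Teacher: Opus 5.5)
Your proposal is correct and follows the paper's own route. The paper supports this proposition only by appeal to Theorem~\ref{TH2}, whose proof is simply ``similar to Theorem~\ref{TH1}, by Lemma~\ref{lemwt}.'' You spell exactly that out: weights $R$ from Lemma~\ref{lemwt} with prescribed eigenvector $v^{\prime}\propto W^\mathrm{T}P_\mathrm{M}^\infty$, Hurwitz-ness of $\mathrm{A}_G-{\cal H}$ via Lemmas~\ref{nons}--\ref{MH}, and the limit identified through $\mathrm{A}_Gv^{\prime}=0$, giving $P_{\mathrm{UG},k}\to\sum_j w_{jk}P_{\mathrm{M},j}$.

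Two of your additions are refinements the paper leaves implicit: treating the time-varying reference $W^\mathrm{T}P_\mathrm{M}(t)$ as a vanishing input, and the caveat about the sign of $W^\mathrm{T}P_\mathrm{M}^\infty$. On the sign caveat, nonzero mixed signs need no new lemma. The minimum-norm weights \eqref{Wm} have the sign of $v^{\prime}_{k_1}v^{\prime}_{k_2}$, so the signature similarity $\mathrm{diag}(\mathrm{sign}(v^{\prime}_k))$ reduces that case to the positive one.
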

\subsection{Implementation of the Designed Scheme and Analysis of Its Computational Complexity}
\par A concise diagram that facilitates the implementation of the designed scheme is shown in Fig. \ref{fc}. Based on the given communication topologies and power allocation ratios specified by a VPP, using Lemma \ref{lemwt} and \eqref{Wm}, appropriate communication weights can be calculated such that the weighted adjacency matrix has a given eigenvalue centrality. Then, the MC multiconsensus controller, power mismatch estimator, and power allocation protocol are designed by using this communication network and its weighted version. So far, the ED problem of a large-scale microgrid has been solved.
\par In Lemma \ref{lemwt}, the computational complexity of constructing a linear system of equations as shown in \eqref{sle} and solving it using \eqref{Wm} are ${\cal O}(\mathrm{n_m}^2)$ and ${\cal O}(\mathrm{m}^3)$, respectively. However, since the designed dynamic based ED scheme are all linear, computational complexities of \eqref{MC} with \eqref{ci1}, \eqref{pm2}, \eqref{ER1}, and \eqref{agg} with \eqref{ci2} are ${\cal O}(\mathrm{n_m}^3)$, ${\cal O}(\mathrm{n_m}^3)$, ${\cal O}(\mathrm{n_m}^3)$, and ${\cal O}(\mathrm{m}^3)$, respectively. Thus, the computational complexity of the entire scheme is ${\cal O}(\mathrm{n_m}^3)$.
\begin{figure*}[htbp]
	\centering
	\begin{subequations}
		\label{matr}
		\begin{equation}
			{\tilde {\cal L}}=\left[\begin{matrix}
				0&0&0&0\\
				-1&3&-1&-1\\
				-1&-1&3&-1\\
				-1&-1&-1&3\\
			\end{matrix}\right],{\tilde {\cal B}}=\left[\begin{matrix}
				3&-1&-1&-1\\
				0&0&0&0\\
				0&0&0&0\\
				0&0&0&0\\
			\end{matrix}\right],{\cal B}=\left[\begin{matrix}
				1&0&0&0\\
				0&0&0&0\\
				0&0&0&0\\
				0&0&0&0
			\end{matrix}\right],
			G=\left[\begin{matrix}
				0&1&0\\
				0&0&1\\
				1&0&0
			\end{matrix}\right].
		\end{equation}
		\begin{equation}
			\label{Wt1}
			\Omega=\left[\begin{matrix}
				0&0&0.4651&0&0&0&0&0\\
				0.1395&0&0&0.4186&0&0&0&0\\
				0&0&0&0.9660&0&0&0.6440&0\\
				0&0&0&0&0&0&0&2.0930\\
				5.5813&0&0&0&0&0&0&0\\
				0&0.3283&0&0&1.3133&0&0&0\\
				0&0&0&0&0.5581&0&0&0.2791\\
				0&0&0&0&0&0.6977&0&0
			\end{matrix}\right],R=\left[\begin{matrix}
				0&0.6&0\\
				0&0&2.5\\
				0.6667&0&0
			\end{matrix}\right].
		\end{equation}
		\begin{equation}
			{\cal A}=\left[\begin{matrix}
				0&0&1&0&0&0&0&0\\
				1&0&0&1&0&0&0&0\\
				0&0&0&1&0&0&1&0\\
				0&0&0&0&0&0&0&1\\
				1&0&0&0&0&0&0&0\\
				0&1&0&0&1&0&0&0\\
				0&0&0&0&1&0&0&1\\
				0&0&0&0&0&1&0&0
			\end{matrix}\right],B=\left[\begin{matrix}
				1&0&0&0&0&0&0&0\\
				0&0&0&0&0&0&0&0\\
				0&0&0&0&0&0&0&0\\
				0&0&0&0&0&0&0&0\\
				0&0&0&0&0&0&0&0\\
				0&0&0&0&0&0&0&0\\
				0&0&0&0&0&0&0&0\\
				0&0&0&0&0&0&0&0
			\end{matrix}\right].
		\end{equation}
	\end{subequations}
\end{figure*}
\begin{figure}[h]
	\centering
	\includegraphics[width=8cm]{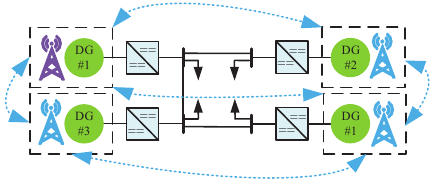}
	\caption{A 4 bus system for each subgrid.}
	\label{4bus}
\end{figure}
\section{Case Study}
A MG is selected as a subgrid as shown in Fig. \ref{4bus}, where agent 1 marked in light purple managing DG 1 is chosen as the flexible agent. The transmission mechanism of data packets follows Definition \ref{assu3}. For the sake of simplicity and generality, a VPP is leveraged to aggregate 8 such power network, where all flexible agents communicate with each other over a graph shown in Fig. \ref{com1} and the pinning signal is provided by a VPP. So, according to Figs. \ref{4bus} and \ref{com1}, the matrices used are shown in \eqref{matr}.
\begin{figure}[h]
	\centering
	\includegraphics[width=4cm]{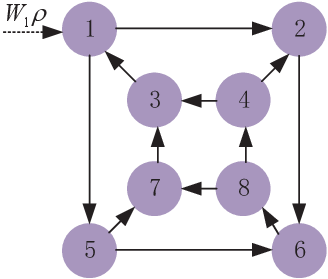}
	\caption{Communication topology among flexible agents.}
	\label{com1}
\end{figure}
\par To verify the effectiveness of the designed scheme and further test it, four simulation cases are designed under different detailed configurations. In Case 1, the designed scheme is tested under constant electricity prices, which indicates that under the established power allocation ratio, MC can achieve the expected multiconsensus, and the power deficit of each subgrid can also be supplemented according to these ratios. Furthermore, under a tiered electricity price, this ED problem can still be solved in Case 2. Even in the event of load switching, as shown in Case 3, the change in load power can be allocated to UGs according to the predetermined ratio. In addition, in Case 4, the designed scheme is also tested when the aggregation coefficient changes, which shows that when the aggregation coefficient changes, the power deficit in each subgrid is allocated according to the new proportion, and MC also achieved the desired multiconsensus.
\begin{figure}[h]
	\centering
	\includegraphics[width=4cm]{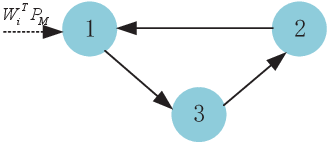}
	\caption{Communication topology among UG agents.}
	\label{com2}
\end{figure}
\begin{figure}[h]
	\centering
	\includegraphics[width=8cm]{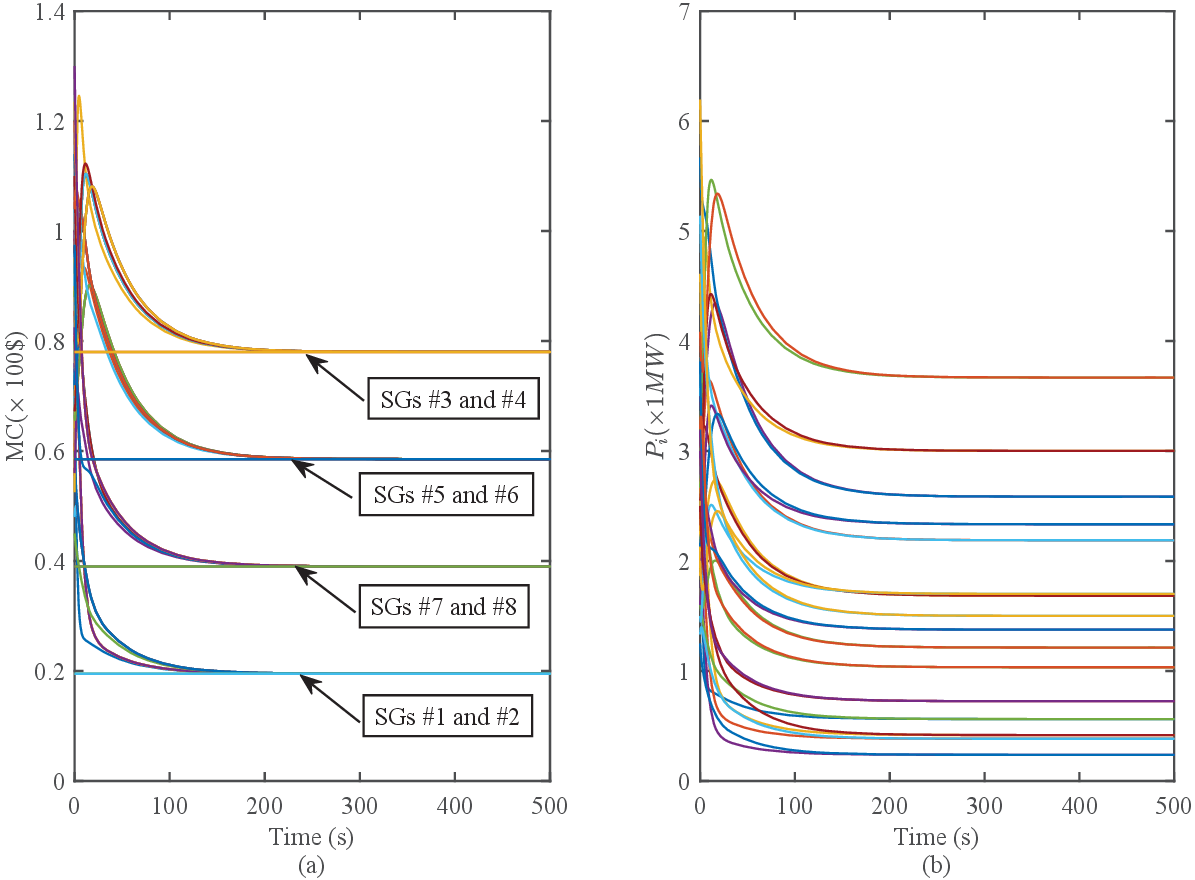}
	\caption{Evolution of MC and output power of each DG with a constant electricity price.}
	\label{MC_P}
\end{figure}
\begin{figure}[h]
	\centering
	\includegraphics[width=8cm]{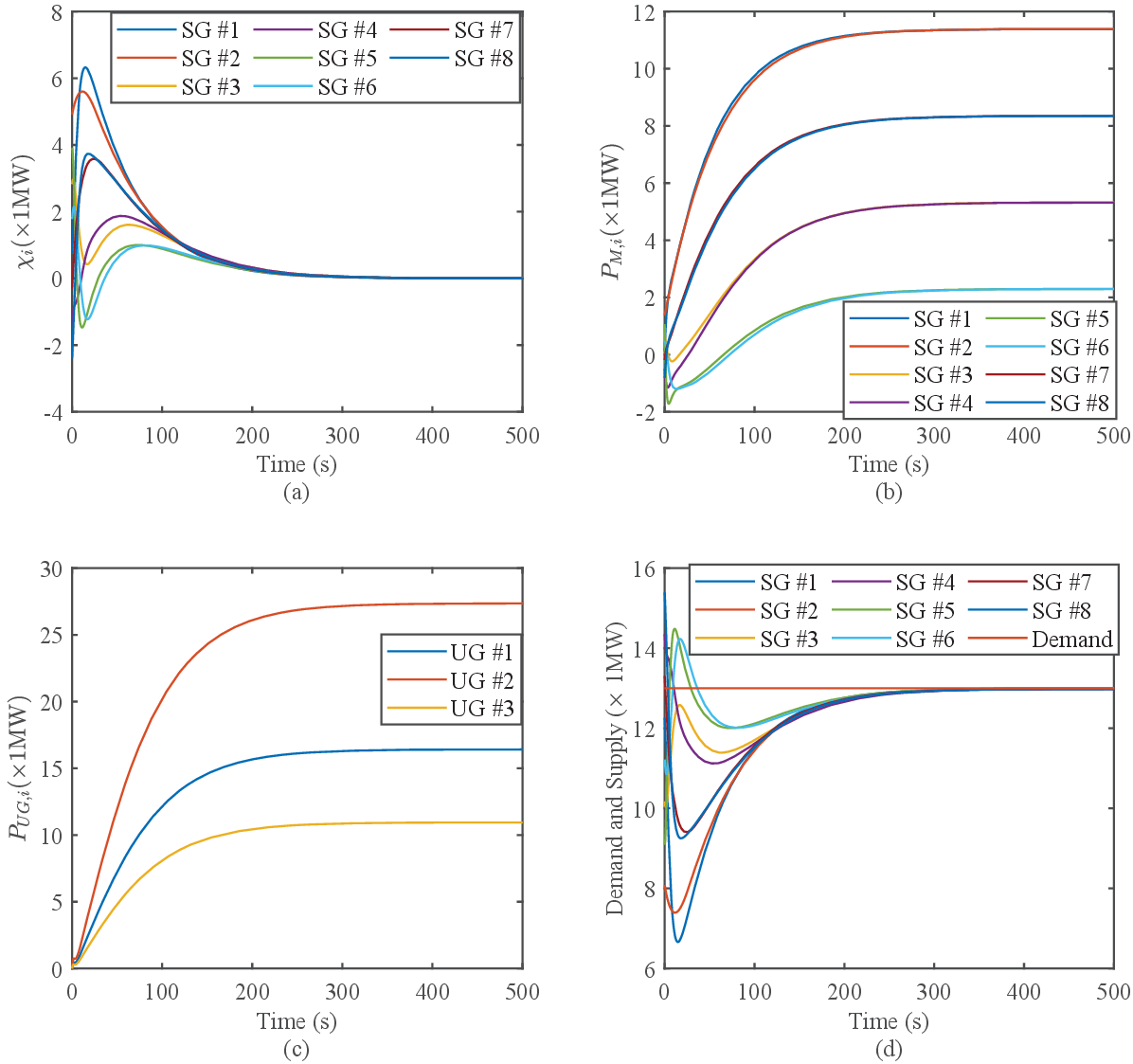}
	\caption{Evolution of total estimated power mismatch, exchange power, and supply-demand balance of each subgrid and the output power of each UG under constant electricity price.}
	\label{UG}
\end{figure}
\subsection{Case 1. Simulation results under constant electricity prices}
\par The weight matrix used in the aggregation strategy is
\begin{equation}
	\label{W1}
	W=\left[\begin{matrix}
		0.71&0.27&0.02\\
		0.79&0.13&0.08\\
		0.05&0.45&0.5\\
		0.09&0.38&0.53\\
		0.016&0.022&0.962\\
		0&0.05&0.95\\
		0.18&0.71&0.11\\
		0.3&0.5&0.2
	\end{matrix}\right].
\end{equation}
The adopted electricity price vector is $\rho=[0.1,0.4,0.8]^\mathrm{T}$. Then, MCs of each subgrid are expected to converge to $W\rho=[0.195,0.195,0.585,0.585,0.78,0.78,0.39,0.39]^\mathrm{T}$. With this data in hand, using Lemma \ref{lemwt} to obtain the communication weight matrices as shown in \eqref{Wt1} and applying schemes \eqref{lam1}, \eqref{ci}, \eqref{pm2}, \eqref{ER1}, \eqref{agg}, and \eqref{ci2}, simulation results are shown in Figs. \ref{MC_P} and \ref{UG}.
\begin{figure}[h]
	\centering
	\includegraphics[width=8cm]{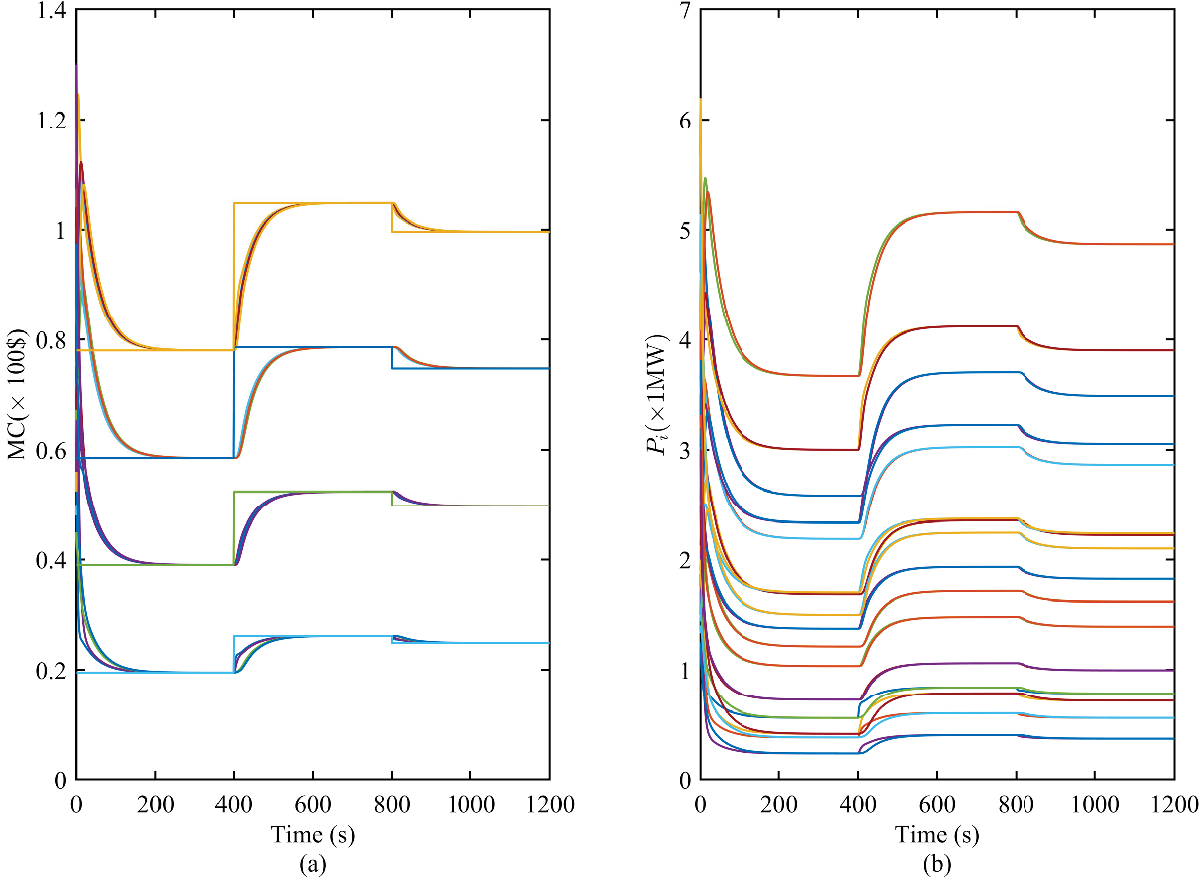}
	\caption{Evolution of MC and output power of each DG under tiered electricity price.}
	\label{MC_PJieti}
\end{figure}
\begin{figure}[h]
	\centering
	\includegraphics[width=8cm]{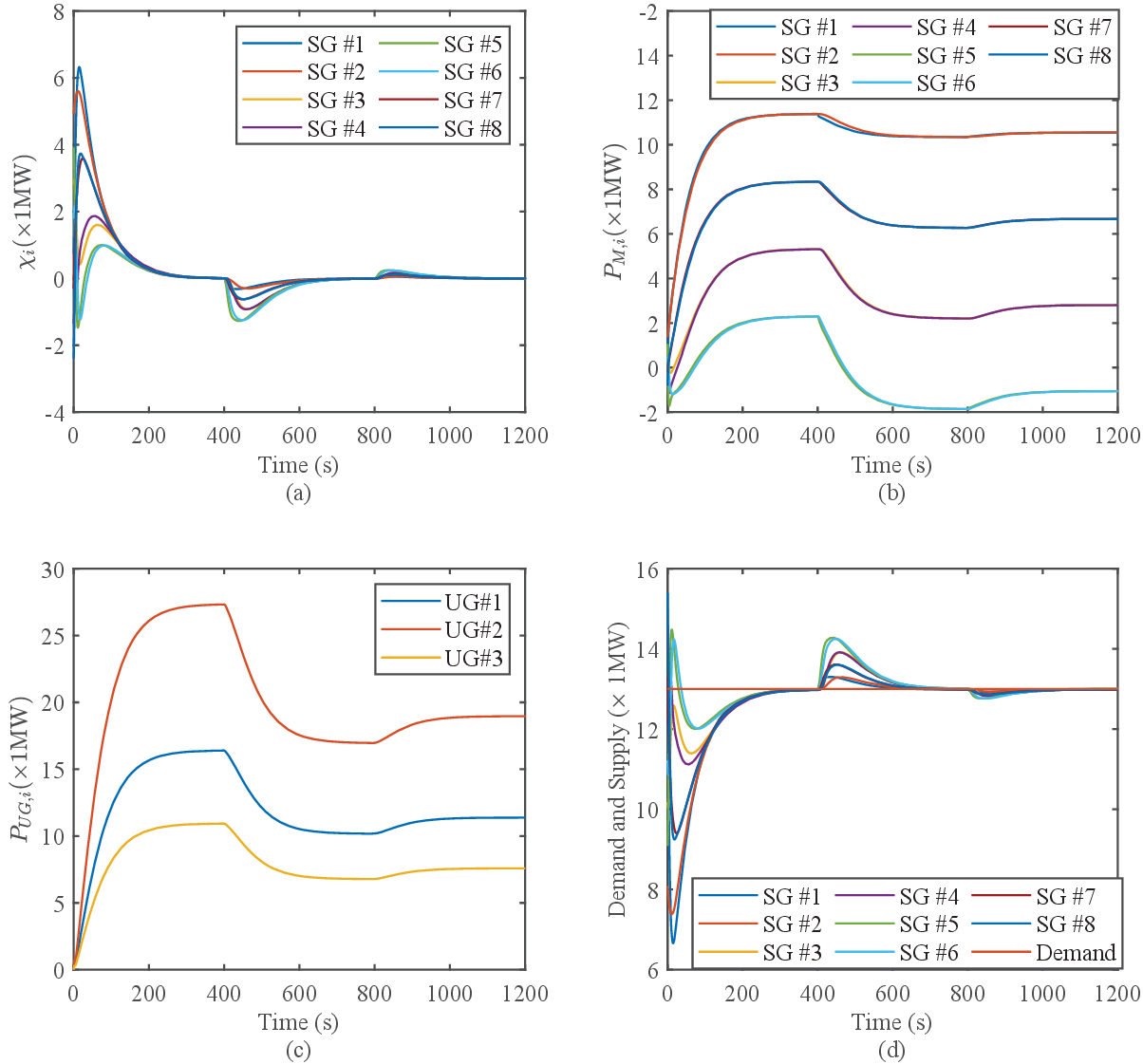}
	\caption{Evolution of total estimated power mismatch, exchange power, and supply-demand balance of each subgrid and the output power of each UG under tiered electricity price.}
	\label{UGJieti}
\end{figure}
\begin{figure}
	\centering
	\includegraphics[width=8cm]{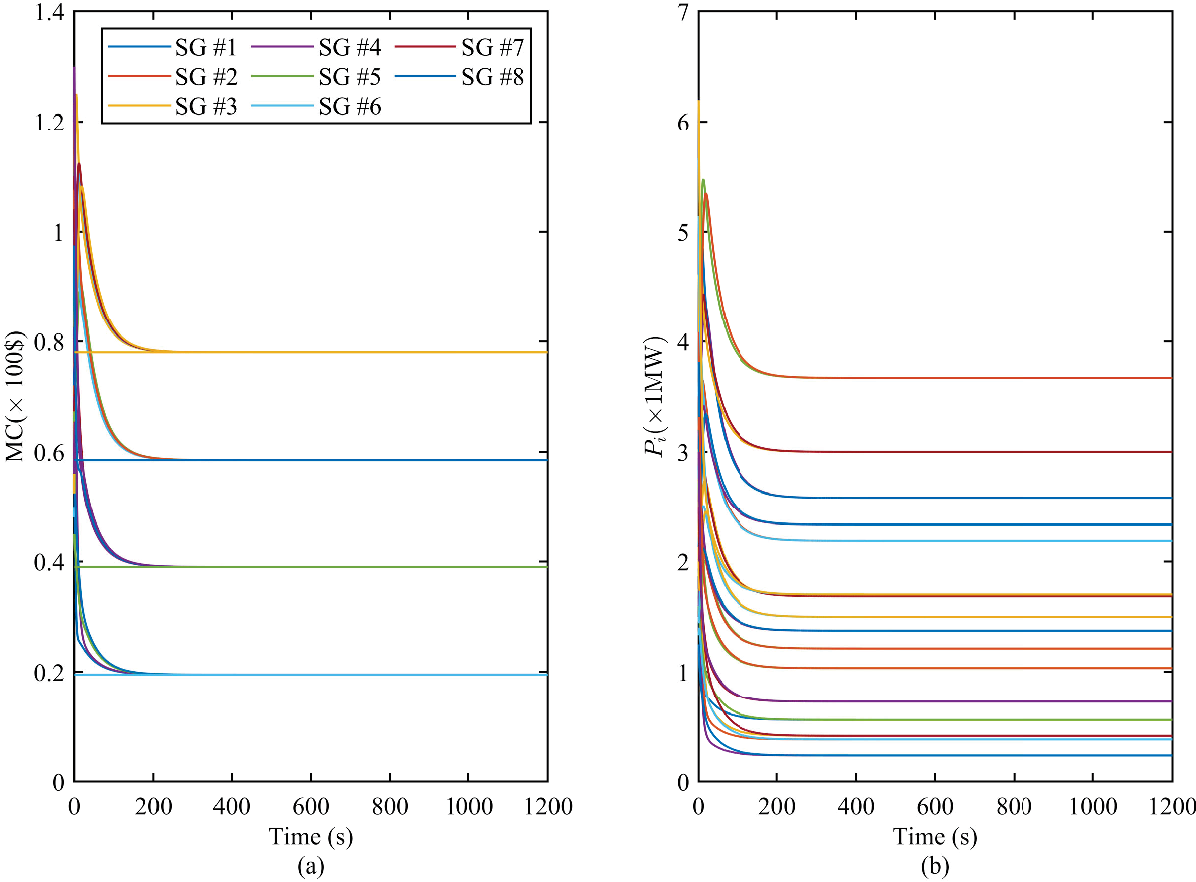}
	\caption{Evolution of MC and output power of each DG under load switching.}
	\label{demandMC}
\end{figure}
\begin{figure}
	\centering
	\includegraphics[width=8cm]{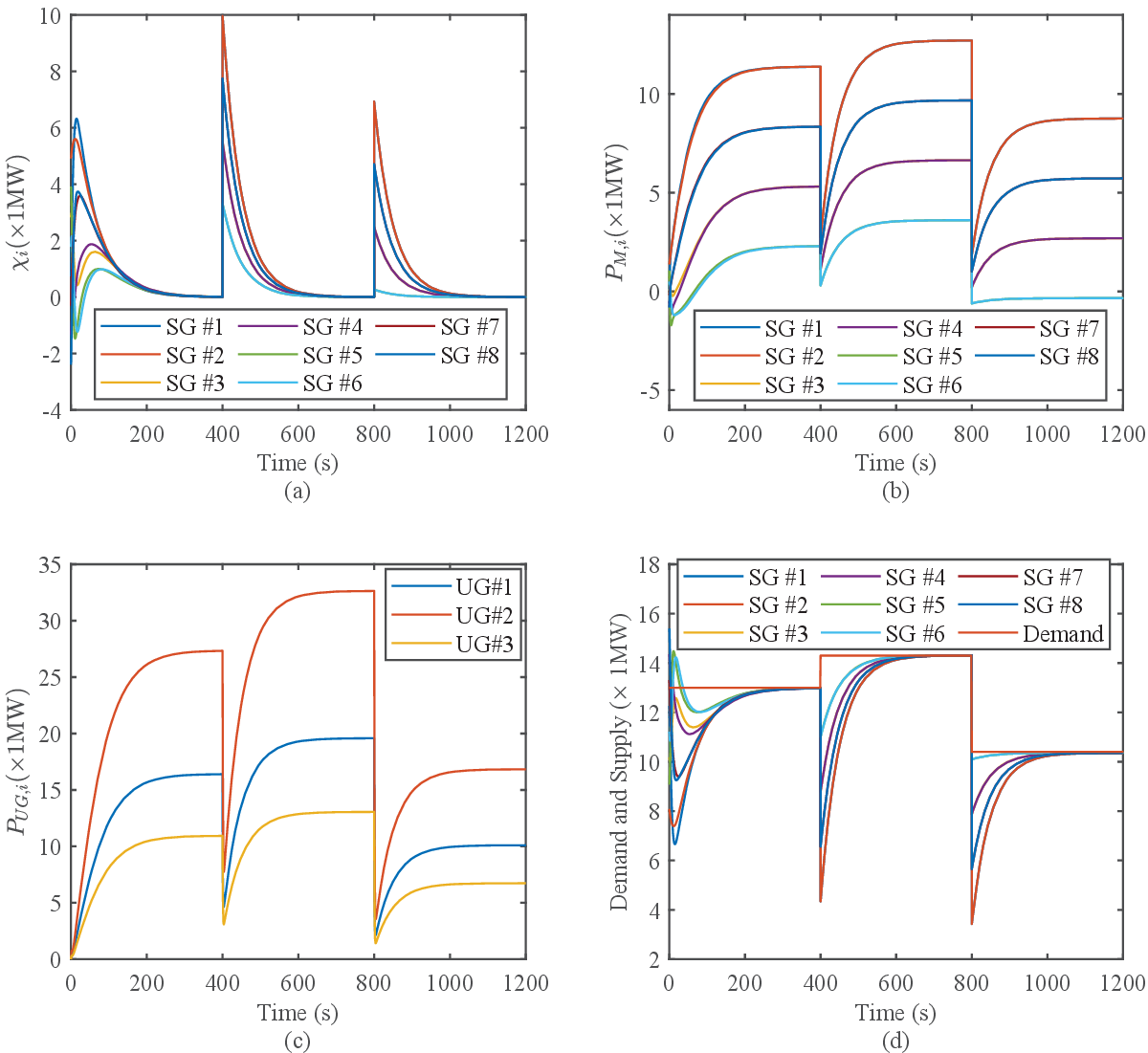}
	\caption{Evolution of total estimated power mismatch, exchange power, and supply-demand balance of each subgrid and the output power of each UG under load switching.}
	\label{demandUG}
\end{figure}
\begin{figure}
	\centering
	\includegraphics[width=8cm]{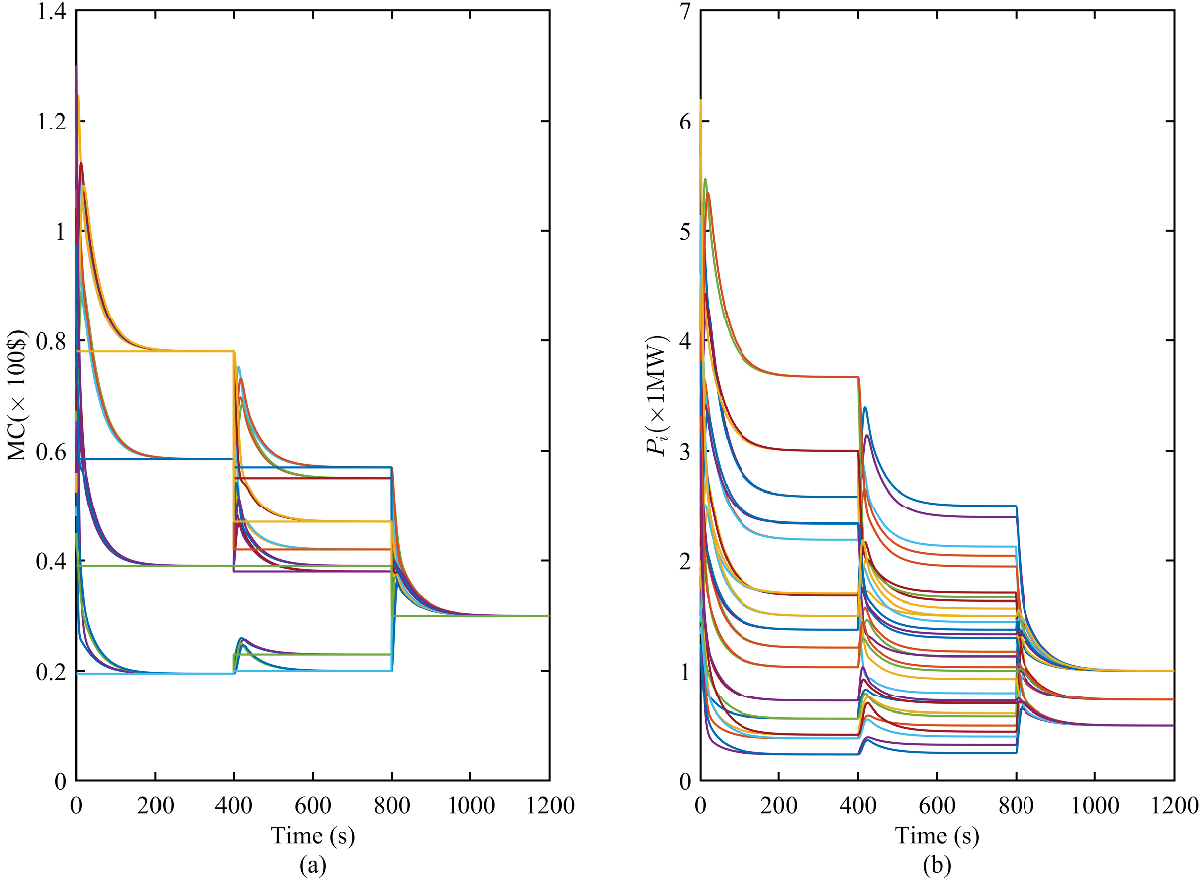}
	\caption{Evolution of MC and output power of each DG under changing aggregation coefficients.}
	\label{MCw}
\end{figure}
\begin{figure}
	\centering
	\includegraphics[width=8cm]{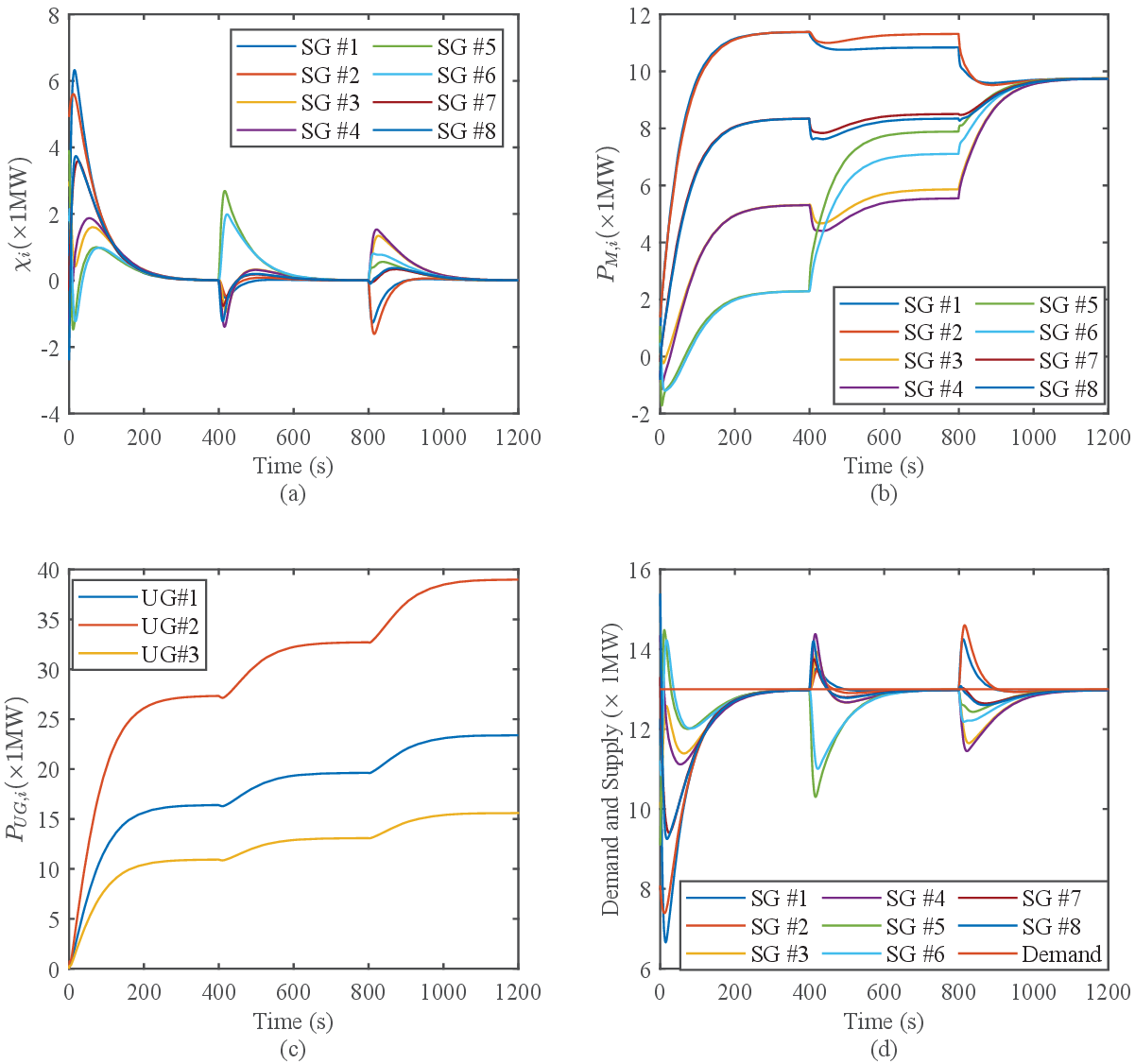}
	\caption{Evolution of total estimated power mismatch, exchange power, and supply-demand balance of each subgrid and the output power of each UG under changing aggregation coefficients.}
	\label{UGw}
\end{figure}
\par From Fig. \ref{MC_P}, it can be seen that although the communication network between flexible agents is strongly connected, by applying the designed schemes, MCs of each group of DGs can converge to the desired value as shown in Fig. \ref{MC_P}(a), resulting in stable output power as shown in Fig. \ref{MC_P}(b). From Fig. \ref{UG}, it can be seen that the estimated values of the average power mismatch for each group ultimately converge to 0 as shown in Fig. \ref{UG}(a). The power provided by VPP for each subgrid can ultimately ensure stability as shown in Fig. \ref{UG}(c). The communication weights calculated by Lemma \ref{lemwt} can result in the precise allocation of VPP power to UGs and ensure a balance between supply and demand of power in each subgrid as shown in Fig. \ref{UG}(d).
\subsection{Case 2. Simulation results under tiered electricity pricing}
\par The simulation in this case is similar to Case 1. In contrast, a tiered pricing strategy is applied here to test the designed solution, where time-phased electric price vectors are $\rho_{0\le t<400s}=[0.1,0.4,0.8]^\mathrm{T}$, $\rho_{400\le t<800s}=[0.2,0.4,0.6]^\mathrm{T}$, and $\rho_{800\le t\le1200s}=[0.1,0.6,0.8]^\mathrm{T}$. At this point, the communication weights between clusters are the same as in Case 1. Then, simulation results are shown in Figs. \ref{MC_PJieti} and \ref{UGJieti}.
\par From the simulation results, it can be seen that under the same communication weight, MC can converge to a tiered electricity price as shown in Fig. \ref{MC_PJieti}(a), resulting in stable output power as shown in Fig. \ref{MC_PJieti}(b). Meanwhile, the power mismatch can be estimated well and allocated among UGs as expected, ensuring a balance between supply and demand of power for each subgrid, as shown in Fig. \ref{UGJieti}(a), (c), and (d), respectively.
\subsection{Case 3. Simulation results under load switching}
\par Based on Case 1, this case is arranged to investigate the impact of the designed solution on load switching. Therein, there is an increase in each load node at $t=400s$ and a decrease in each load node at $t=800s$. The simulation results are shown in Figs. \ref{demandMC} and \ref{demandUG}.
\par Compared to Case 1, from the simulation results in Fig. \ref{demandMC}, the consensus states of MCs as shown in Fig. \ref{demandMC}(a), as well as the stability of the output power of each DG as shown in Fig. \ref{demandMC}(b), are not be affected by the change in load power. Although the average power mismatch of each subgrid fluctuates due to changes in load power, it eventually converges to 0 as shown in Fig. \ref{demandUG}(a). The power deficit in each subgrid can still be well estimated as shown in Fig. \ref{demandMC}(b) and successfully allocated to all UGs according to the predetermined ratio as shown in Fig. \ref{demandMC}(c), ensuring a balance between power supply and demand in each subgrid as shown in Fig. \ref{demandMC}(d).
\subsection{Case 4. Simulation results under changing aggregation coefficients}
\par VPP will adjust the power demand ratio for UGs based on the requirements for the power quality of a subgrid. Therefore, this case is arranged to investigate the performance of the designed solution in response to changes in aggregation coefficients. The weight matrix used in the aggregation strategy are shown in \eqref{WW}. The electricity price vector of UGs is $\rho=[0.1,0.4,0.8]^\mathrm{T}$. Thus, the aggregate electricity price vector for different time periods of all subgrids are
$W\rho=[0.195,0.195,0.585,0.585,0.78,0.78,0.39,0.39]^\mathrm{T}$ ($0<t\le 400s$), $W\rho=[0.23,0.2,0.55,0.57,0.42,0.47,0.38,0.39]^\mathrm{T}$ ($0<t\le 400s$), $W\rho=[0.3,0.3,0.3,0.3,0.3,0.3,0.3,0.3]^\mathrm{T}$ ($800\le t\le 1200s$). By Lemma \ref{lemwt}, the communication weight matrices used for each time period are shown in \eqref{Wt1} ($0\le t<400s$), \eqref{Wt2} ($400s\le t<800s$), and \eqref{Wt3} ($800s\le t\le1200s$). As a result, simulation results are shown in Figs. \ref{MCw} and \ref{UGw}.
\begin{figure*}[h]
	\centering
	\begin{equation}
		\label{WW}
		\begin{aligned}
			W_{0\le t<400s}=\left[\begin{matrix}
				0.71&0.27&0.02\\
				0.79&0.13&0.08\\
				0.05&0.45&0.5\\
				0.09&0.38&0.53\\
				0.016&0.022&0.962\\
				0&0.05&0.95\\
				0.18&0.71&0.11\\
				0.3&0.5&0.2
			\end{matrix}\right],W_{400s\le t<800s}=\left[\begin{matrix}
				0.7&0.2&0.1\\
				0.8&0.1&0.1\\
				0.1&0.45&0.45\\
				0.1&0.4&0.5\\
				0.2&0.6&0.2\\
				0.3&0.3&0.4\\
				0.2&0.7&0.1\\
				0.3&0.5&0.2
			\end{matrix}\right],
			W_{800s\le t\le 1200s}=\left[\begin{matrix}
				0.6&0.2&0.2\\
				0.6&0.2&0.2\\
				0.6&0.2&0.2\\
				0.6&0.2&0.2\\
				0.6&0.2&0.2\\
				0.6&0.2&0.2\\
				0.6&0.2&0.2\\
				0.6&0.2&0.2
			\end{matrix}\right].
		\end{aligned}
	\end{equation}
	\begin{equation}
			\label{Wt2}
			\Omega_{400s\le t<800s}=\left[\begin{matrix}
				0&0&0.5833&0&0&0&0&0\\
				0.1698&0&0&0.421&0&0&0&0\\
				0&0&0&0.9323&0&0&0.6214&0\\
				0&0&0&0&0&0&0&2.0393\\
				2.5486&0&0&0&0&0&0&0\\
				0&0.6059&0&0&1.2728&0&0&0\\
				0&0&0&0&0.6778&0&0&0.6294\\
				0&0&0&0&0&1.158&0&0
			\end{matrix}\right].
		\end{equation}
		\begin{equation}
			\label{Wt3}
			\Omega_{800s\le t\le 1200s}=\left[\begin{matrix}
				0&0&1.3953&0&0&0&0&0\\
				0.6977&0&0&0.6977&0&0&0&0\\
				0&0&0&0.6977&0&0&0.6977&0\\
				0&0&0&0&0&0&0&1.3953\\
				1.3953&0&0&0&0&0&0&0\\
				0&0.6977&0&0&0.6977&0&0&0\\
				0&0&0&0&0.6977&0&0&0.6977\\
				0&0&0&0&0&1.3953&0&0
			\end{matrix}\right].
		\end{equation}
	\end{figure*}
\par The change in aggregation coefficients results in different communication weights. Driven by schemes \eqref{lam1}, \eqref{ci}, \eqref{pm2}, \eqref{ER1}, \eqref{agg}, and \eqref{ci2}, all MCs of DGs can achieve the expected leader-follower cluster consensus as shown in Fig. \ref{MCw}(a), resulting in stable output power as shown in Fig. \ref{MCw}(b). Similarly, the average power mismatch estimation converges to 0 in each subgrid as shown in Fig. \ref{UGw}(a), and the power deficit of each subgrid can be successfully estimated as shown in Fig. \ref{UGw}(b). Based on the above, the power deficit of each subgrid is evenly distributed among UGs according to a predetermined ratio as shown in Fig. \ref{UGw}(c). Meanwhile, the supply and demand balance of power in each subgrid can be maintained well as shown in Fig. \ref{UGw}(d).
\begin{table}[h]\label{tab0}
	\caption{Comparison on algorithms with previous methods}
	\centering
	\begin{tabular}{cccc}
		\toprule
		Schemes &\makecell[c]{The one in\\this paper}&\makecell[c]{The one in\\ \cite{10302354}}&\makecell[c]{The one in\\ \cite{zaeryDistributedGlobalEconomical2020}}\\
		\midrule
		Consensus control & $\checkmark$ & $\checkmark$ & $\checkmark$\\
		Multiconsensus control & $\checkmark$ &$\times$ & $\times$\\
		Containment control& $\checkmark$ & $\times$ & $\times$\\
		Prescribed steady state&$\checkmark$ & $\times$ & $\times$\\
		\bottomrule
	\end{tabular}
\end{table}
\begin{table}[h]\label{tab1}
	\caption{Comparison on operating conditions with previous methods}
	\centering
	\begin{tabular}{cccc}
		\toprule
		Schemes &\makecell[c]{The one in\\this paper}&\makecell[c]{The one in\\ \cite{10302354}}&\makecell[c]{The one in\\ \cite{zaeryDistributedGlobalEconomical2020}}\\
		\midrule
		Multiple UGs & $\checkmark$ & $\times$ & $\times$\\
		Multiple subgrids & $\checkmark$ &$\times$ & $\times$\\
		\makecell[c]{Differential demand\\between subgrids}& $\checkmark$ & $\times$ & $\times$\\
		\makecell[c]{Switching communication\\topology}&$\times$ & $\checkmark$ & $\checkmark$\\
		\bottomrule
	\end{tabular}
\end{table}
\section{Conclusion}
In this article, a distributed cluster ED scheme induced by communication topology is investigated. The MCs in each subgrid converge to the same electricity price, and the MCs in different subgrids converge to different electricity prices, where the electricity price in each subgrid is a convex combination of several UG electricity prices. Based on VPP, the power deficit in each subgrid is well estimated and proportionally allocated to UGs according to predetermined convex combination coefficients. It is worth mentioning that the communication topology between clusters used is directed and connected, and the controller is designed by calculating the required communication weights to achieve the goal. The simulation results show that the designed distributed scheme can achieve cluster ED and ensure power supply and demand balance under load switching, tiered pricing, and variable aggregation coefficient. Compared with previous schemes, the designed scheme has significant advantages in algorithm design and application scenarios, as shown in Tables I and II. However, the current scheme is still a simplified version, and physical constraints such as prohibited operating zones of generators, spinning reserve requirements and max/min power range have not been included in the designed scheme. In view of this, the designed scheme will be improved in the future to enhance practicality. In addition, in the face of frequent uncertainties in communication networks, such as network attacks and communication failures, the solution proposed in this article should be further developed to improve the reliability of communication networks.
\appendices
\ifCLASSOPTIONcaptionsoff
\newpage
\fi
\bibliographystyle{IEEEtran}
\bibliography{BESS_reference.bib}
\end{document}